\newcolumntype{L}[1]{>{\raggedright\let\newline\\\arraybackslash\hspace{0pt}}m{#1}}
\newcolumntype{C}[1]{>{\centering\let\newline\\\arraybackslash\hspace{0pt}}m{#1}}
\newcolumntype{R}[1]{>{\raggedleft\let\newline\\\arraybackslash\hspace{0pt}}m{#1}}
\let\oldgls\gls
\let\oldglspl\glspl
\newcommand\fussy@ifnextchar[3]{%
	\let\reserved@d=#1%
	\def\reserved@a{#2}%
	\def\reserved@b{#3}%
	\futurelet\@let@token\fussy@ifnch}
\def\fussy@ifnch{%
	\ifx\@let@token\reserved@d
		\let\reserved@c\reserved@a
	\else
		\let\reserved@c\reserved@b
	\fi
	\reserved@c}
\renewcommand{\gls}[1]{%
\oldgls{#1}\fussy@ifnextchar.{\@checkperiod}{\@}}
\renewcommand{\glspl}[1]{%
\oldglspl{#1}\fussy@ifnextchar.{\@checkperiod}{\@}}
\newcommand{\@checkperiod}[1]{%
	\ifnum\sfcode`\.=\spacefactor\else#1\fi
}
\newacronym{wrt}{w.r.t.}{with respect to}
\newacronym{RHS}{R.H.S.}{right-hand side}
\newacronym{LHS}{L.H.S.}{left-hand side}
\newacronym{iid}{i.i.d.}{independent and identically distributed}
\let\saved@bibitem\@bibitem\makeatother
\let\@bibitem\saved@bibitem\makeatother
\crefname{equation}{}{}
\Crefname{equation}{}{}
\crefname{claim}{claim}{claims}
\crefname{step}{step}{steps}
\crefname{line}{line}{lines}
\crefname{condition}{condition}{conditions}
\crefname{dmath}{}{}
\crefname{dseries}{}{}
\crefname{dgroup}{}{}
\crefname{Problem}{Problem}{Problems}
\crefname{Theorem}{Theorem}{Theorems}
\crefname{Corollary}{Corollary}{Corollaries}
\crefname{Proposition}{Proposition}{Propositions}
\crefname{Lemma}{Lemma}{Lemmas}
\crefname{Definition}{Definition}{Definitions}
\crefname{Example}{Example}{Examples}
\crefname{Assumption}{Assumption}{Assumptions}
\crefname{Remark}{Remark}{Remarks}
\crefname{Rem}{Remark}{Remarks}
\crefname{remarks}{Remarks}{Remarks}
\crefname{Appendix}{Appendix}{Appendices}
\crefname{Supplement}{Supplement}{Supplements}
\crefname{Exercise}{Exercise}{Exercises}
\crefname{Theorem_A}{Theorem}{Theorems}
\crefname{Corollary_A}{Corollary}{Corollaries}
\crefname{Proposition_A}{Proposition}{Propositions}
\crefname{Lemma_A}{Lemma}{Lemmas}
\crefname{Definition_A}{Definition}{Definitions}
\def\cleartheorem#1{%
    \expandafter\let\csname#1\endcsname\relax
    \expandafter\let\csname c@#1\endcsname\relax
}
\def\clearthms#1{ \@for\tname:=#1\do{\cleartheorem\tname} }
		\newtheorem{Theorem}{Theorem}
		\newtheorem{Corollary}{Corollary}
		\newtheorem{Proposition}{Proposition}
		\newtheorem{Lemma}{Lemma}
		\newtheorem{Theorem}{Theorem}
		\newtheorem{Proposition}[Theorem]{Proposition}
	\newtheorem{Definition}{Definition}
	\newtheorem{Assumption}{Assumption}
\newcommand{\qednew}{\nobreak \ifvmode \relax \else
		\ifdim\lastskip<1.5em \hskip-\lastskip
			\hskip1.5em plus0em minus0.5em \fi \nobreak
		\vrule height0.75em width0.5em depth0.25em\fi}
\newcommand{\nn}{\nonumber\\ }
\NewDocumentCommand{\movedownsub}{e{^_}}{%
	\IfNoValueTF{#1}{%
		\IfNoValueF{#2}{^{}}
	}{%
		^{#1}
	}%
	\IfNoValueF{#2}{_{#2}}
}
\let\latexchi\chi
\RenewDocumentCommand{\chi}{}{\latexchi\movedownsub}
\newcommand{\calB}{\mathcal{B}}
\newcommand{\calE}{\mathcal{E}}
\newcommand{\calF}{\mathcal{F}}
\newcommand{\calG}{\mathcal{G}}
\newcommand{\calH}{\mathcal{H}}
\newcommand{\calK}{\mathcal{K}}
\newcommand{\calN}{\mathcal{N}}
\newcommand{\calO}{\mathcal{O}}
\newcommand{\bA}{\mathbf{A}}
\newcommand{\bI}{\mathbf{I}}
\newcommand{\bw}{\mathbf{w}}
\newcommand{\bbC}{\mathbb{C}}
\newcommand{\bbN}{\mathbb{N}}
\newcommand{\bbP}{\mathbb{P}}
\newcommand{\bbQ}{\mathbb{Q}}
\newcommand{\bbR}{\mathbb{R}}
\newcommand{\frakJ}{\mathfrak{J}}
\newcommand{\scT}{\mathscr{T}}
\DeclareSymbolFont{bsfletters}{OT1}{cmss}{bx}{n}
\DeclareSymbolFont{ssfletters}{OT1}{cmss}{m}{n}
\DeclareMathSymbol{\bsfGamma}{0}{bsfletters}{'000}
\DeclareMathSymbol{\ssfGamma}{0}{ssfletters}{'000}
\DeclareMathSymbol{\bsfDelta}{0}{bsfletters}{'001}
\DeclareMathSymbol{\ssfDelta}{0}{ssfletters}{'001}
\DeclareMathSymbol{\bsfTheta}{0}{bsfletters}{'002}
\DeclareMathSymbol{\ssfTheta}{0}{ssfletters}{'002}
\DeclareMathSymbol{\bsfLambda}{0}{bsfletters}{'003}
\DeclareMathSymbol{\ssfLambda}{0}{ssfletters}{'003}
\DeclareMathSymbol{\bsfXi}{0}{bsfletters}{'004}
\DeclareMathSymbol{\ssfXi}{0}{ssfletters}{'004}
\DeclareMathSymbol{\bsfPi}{0}{bsfletters}{'005}
\DeclareMathSymbol{\ssfPi}{0}{ssfletters}{'005}
\DeclareMathSymbol{\bsfSigma}{0}{bsfletters}{'006}
\DeclareMathSymbol{\ssfSigma}{0}{ssfletters}{'006}
\DeclareMathSymbol{\bsfUpsilon}{0}{bsfletters}{'007}
\DeclareMathSymbol{\ssfUpsilon}{0}{ssfletters}{'007}
\DeclareMathSymbol{\bsfPhi}{0}{bsfletters}{'010}
\DeclareMathSymbol{\ssfPhi}{0}{ssfletters}{'010}
\DeclareMathSymbol{\bsfPsi}{0}{bsfletters}{'011}
\DeclareMathSymbol{\ssfPsi}{0}{ssfletters}{'011}
\DeclareMathSymbol{\bsfOmega}{0}{bsfletters}{'012}
\DeclareMathSymbol{\ssfOmega}{0}{ssfletters}{'012}
\newcommand{\btheta}{\bm{\theta}}
\newcommand{\bSigma	}{\bm{\Sigma}}
\newcommand{\bPhi}{\bm{\Phi}}
\newcommand*\rel@kern[1]{\kern#1\dimexpr\macc@kerna}
\newcommand*\widebar[1]{%
  \begingroup
  \def\mathaccent##1##2{%
    \rel@kern{0.8}%
    \overline{\rel@kern{-0.8}\macc@nucleus\rel@kern{0.2}}%
    \rel@kern{-0.2}%
  }%
  \macc@depth\@ne
  \let\math@bgroup\@empty \let\math@egroup\macc@set@skewchar
  \mathsurround\z@ \frozen@everymath{\mathgroup\macc@group\relax}%
  \macc@set@skewchar\relax
  \let\mathaccentV\macc@nested@a
  \macc@nested@a\relax111{#1}%
  \endgroup
}
\DeclareMathOperator*{\argsup}{arg\,sup}
\DeclareMathOperator{\Tr}{Tr}
\DeclareMathOperator{\spn}{span}
\DeclareMathOperator{\var}{var}
\DeclareMathOperator{\cov}{cov}
\newcommand{\ifbcdot}[1]{\ifblank{#1}{\cdot}{#1}}
\DeclarePairedDelimiterX\abs[1]{\lvert}{\rvert}{\ifbcdot{#1}}
\DeclarePairedDelimiterX\parens[1]{(}{)}{\ifbcdot{#1}}
\DeclarePairedDelimiterX\brk[1]{[}{]}{\ifbcdot{#1}}
\DeclarePairedDelimiterX\braces[1]{\{}{\}}{\ifbcdot{#1}}
\DeclarePairedDelimiterX\angles[1]{\langle}{\rangle}{\ifblank{#1}{\cdot,\cdot}{#1}}
\DeclarePairedDelimiterX\ip[2]{\langle}{\rangle}{\ifbcdot{#1},\ifbcdot{#2}}
\DeclarePairedDelimiterX\norm[1]{\lVert}{\rVert}{\ifbcdot{#1}}
\DeclarePairedDelimiterX\ceil[1]{\lceil}{\rceil}{\ifbcdot{#1}}
\DeclarePairedDelimiterX\floor[1]{\lfloor}{\rfloor}{\ifbcdot{#1}}
\DeclarePairedDelimiterXPP\trace[1]{\operatorname{Tr}}{(}{)}{}{\ifbcdot{#1}} 
\DeclarePairedDelimiterXPP\col[1]{\operatorname{col}}{\{}{\}}{}{\ifbcdot{#1}} 
\DeclarePairedDelimiterXPP\row[1]{\operatorname{row}}{\{}{\}}{}{\ifbcdot{#1}} 
\DeclarePairedDelimiterXPP\erf[1]{\operatorname{erf}}{(}{)}{}{\ifbcdot{#1}}
\DeclarePairedDelimiterXPP\erfc[1]{\operatorname{erfc}}{(}{)}{}{\ifbcdot{#1}}
\DeclarePairedDelimiterXPP\KLD[2]{D_f}{(}{)}{}{\ifbcdot{#1}\, \delimsize\|\, \ifbcdot{#2}} 
\DeclarePairedDelimiterXPP\op[2]{\operatorname{#1}}{(}{)}{}{#2} 
\newcommand{\convas}{\stackrel{\mathrm{a.s.}}{\longrightarrow}}
\newcommand{\convd}{\stackrel{\mathrm{d}}{\longrightarrow}}
\newcommand{\T}{^{\intercal}}
\newcommand{\ud}{\,\mathrm{d}} 
\DeclarePairedDelimiterXPP\indicate[1]{{\bf 1}}{\{}{\}}{}{\ifbcdot{#1}}
\newcommand{\ofrac}[1]{{\frac{1}{#1}}}
\newcommand{\odfrac}[1]{{\dfrac{1}{#1}}}
\newcommand{\ddfrac}[2]{{\dfrac{\mathrm{d} {#1}}{\mathrm{d} {#2}}}}
\newcommand{\ppfrac}[2]{\dfrac{\partial {#1}}{\partial {#2}}}
\providecommand\given{}
\DeclarePairedDelimiterX\Set[2]\{\}{%
\renewcommand\given{\SetSymbol[\delimsize]{#1}}
#2
}
\DeclarePairedDelimiterX\Setc[1]\{\}{%
\renewcommand\given{\SetSymbol{:}}
#1
}
\NewDocumentCommand\set{s m t| m}{%
\IfBooleanTF#1%
{\left\{\, #2\mathrel{} \IfBooleanTF{#3}{\middle|}{:}\mathrel{}  #4\, \right\}}%
{\{\, #2 \IfBooleanTF{#3}{\mid}{\mathrel{} : \mathrel{}} #4\, \}}%
}
\NewDocumentCommand{\evalat}{ s O{\big} m e{_^} }{%
\IfBooleanTF{#1}%
{\left. #3 \right|}{#3#2|}%
\IfValueT{#4}{_{#4}}%
\IfValueT{#5}{^{#5}}%
}
\providecommand\given{}
\DeclarePairedDelimiterXPP\cprob[1]{}(){}{
\renewcommand\given{\nonscript\,\delimsize\vert\allowbreak\nonscript\,\mathopen{}}%
\DeclarePairedDelimiterXPP\cexp[1]{}[]{}{
\renewcommand\given{\nonscript\,\delimsize\vert\allowbreak\nonscript\,\mathopen{}}%
#1%
}
\DeclareDocumentCommand \P { s e{_^} d() g } {%
	\mathbb{P}%
	\IfBooleanTF{#1}%
		{
			\IfValueT{#2}{_{#2}}%
			\IfValueT{#3}{^{#3}}%
			\IfValueTF{#5}{\cprob{#4 \given #5}}{\IfValueT{#4}{\cprob{#4}}}%
		}%
		{
			\IfValueT{#2}{_{#2}}%
			\IfValueT{#3}{^{#3}}%
			\IfValueTF{#5}{\cprob*{#4 \given #5}}{\IfValueT{#4}{\cprob*{#4}}}%
		}%
}
\DeclareDocumentCommand \E { s e{_^} o g } {%
	\mathbb{E}%
	\IfBooleanTF{#1}%
		{
			\IfValueT{#2}{_{#2}}%
			\IfValueT{#3}{^{#3}}%
			\IfValueTF{#5}{\cexp{#4 \given #5}}{\IfValueT{#4}{\cexp{#4}}}%
		}%
		{
			\IfValueT{#2}{_{#2}}%
			\IfValueT{#3}{^{#3}}%
			\IfValueTF{#5}{\cexp*{#4 \given #5}}{\IfValueT{#4}{\cexp*{#4}}}%
		}%
}
\NewDocumentCommand \dist {m o o} {%
\mathrm{#1}\left(%
	\IfValueT{#3}{%
		\tl_if_blank:nTF{ #3 }{\cdot\, \middle|\, }{#3\, \middle|\, }%
	}
	\IfValueT{#2}{#2}%
\right)%
}
\newcommand{\N}[2]{\dist{\calN}[#1,\, #2]}
\NewDocumentCommand {\cbrace} {t+ D[]{black} D(){\widthof{#5}} m m } {%
	\begingroup%
		\color{#2}
		\IfBooleanTF{#1}{%
			\overbrace{#4}^%
		}{
			\underbrace{#4}_%
		}%
		{\parbox[c]{#3}{\centering\footnotesize{#5}}}%
	\endgroup%
}
\let\oldforall\forall
\renewcommand{\forall}{\oldforall \, }
\let\oldexist\exists
\renewcommand{\exists}{\oldexist \, }
\DeclareDocumentCommand{\includeCroppedPdf}{ o O{./Figures/} m }{
	\IfFileExists{#2#3-crop.pdf}{}{%
		\immediate\write18{pdfcrop #2#3.pdf #2#3-crop.pdf}}%
	\includegraphics[#1]{#2#3-crop.pdf}
}
\newcommand*{\addFileDependency}[1]{
  \typeout{(#1)}
  \@addtofilelist{#1}
  \IfFileExists{#1}{}{\typeout{No file #1.}}
}
\definecolor{gray90}{gray}{0.9}
	\newcommand{\msout}[1]{\text{\color{green} \sout{\ensuremath{#1}}}}
	\newcommand{\del}[1]{{\color{green}\ifmmode \msout{#1}\else\sout{#1}\fi}}
	\newcommand{\msout}[1]{#1}
	\newcommand{\del}[1]{#1}
\newcommand{\hhide}[1]{}
	\def\@testdef #1#2#3{%
		\def\reserved@a{#3}\expandafter \ifx \csname #1@#2\endcsname
			\reserved@a  \else
			\typeout{^^Jlabel #2 changed:^^J%
				\meaning\reserved@a^^J%
				\expandafter\meaning\csname #1@#2\endcsname^^J}%
			\@tempswatrue \fi}
\DeclareMathOperator{\mmd}{W}
\DeclareMathOperator{\TV}{TV}
\newcommand{\inv}[1]{\left({#1}\right)^{-1}}
\newcommand{\what}[1]{\widehat{#1}}
\newcommand{\wtil}[1]{\widetilde{#1}}
\newcommand{\subs}[2]{\left[{#1}\right]_{#2}}
\newcommand{\supnorm}[1]{\norm{#1}_{\infty}}
\newcommand{\opnorm}[1]{\norm*{#1}_{\mathrm{op}}}
\newenvironment{proof}{\paragraph{Proof:}}{\hfill$\square$}
\begin{document}
	
\title{Semi-Nonparametric Estimation of Distribution Divergence in Non-Euclidean Spaces}

%

\author{%
  Chong Xiao Wang, Wee Peng Tay, Zhenghua Chen \\
  School of Electrical and Electronic Engineering \\
  Nanyang Technological University\\
  50 Nanyang Ave, Singapore 639798 \\
  \texttt{wang1216@e.ntu.edu.sg,wptay@ntu.edu.sg,chen0832@e.ntu.edu.sg} \\
}

\maketitle

\begin{abstract}
	This paper explores methods for estimating or approximating the total variation distance and the chi-squared divergence of probability measures within topological sample spaces, using independent and identically distributed samples. Our focus is on the practical scenario where the sample space is homeomorphic to subsets of Euclidean space, with the specific homeomorphism remaining unknown. Our proposed methods rely on the integral probability metric with witness functions in universal reproducing kernel Hilbert spaces (RKHSs). The estimators we develop consist of learnable parametric functions mapping the sample space to Euclidean space, paired with universal kernels defined in Euclidean space. This approach effectively overcomes the challenge of constructing universal kernels directly on non-Euclidean spaces. Furthermore, the estimators we devise demonstrate asymptotic consistency, and we provide a detailed statistical analysis, shedding light on their practical implementation.
\end{abstract}


\section{Introduction}
Quantifying the divergence of probability measures from observed samples holds pivotal roles across various machine learning domains, often serving as a metric to optimize, such as in applications like the variational auto-encoder \cite{KinWel:C2014} and information bottleneck \cite{AleFisDil:C17}. This paper delves into methods for estimating two classical $f$-divergences --- namely the total variation (TV) distance and $\chi^2$-divergence. 

Despite the availability of a plethora of parametric and non-parametric estimators for these divergences, many off-the-shelve options impose restrictive assumptions on the domain knowledge of the underlying model or necessitate density approximation techniques like space partitioning. For instance, methods such as the $k$-nearest-neighbor ($k$-NN) distances method \citep{WanKulVer:J09,SinPoc:C16} and the nearest neighbor ratio (NNR) estimator \citep{NosMooSek:C17} assume multidimensional continuous densities in the Euclidean space. Additionally, the estimators derived from the von Mises expansion in \citet{KanKriPoc:C15} require density estimation. In contrast, parametric approaches such as the variational method based on the evidence lower-bound \citep{BleKucMcA:J17} and neural estimators based on the variational representation of $f$-divergences \citep{BelBarRaj:C18}, while not stipulating the topology of the input data, prove sensitive to the chosen modeling frameworks, potentially leading to consistency issues \citep{GhiMasDy:C21}.

\paragraph{Motivation}
To summarize, distribution divergence has been predominantly explored within Euclidean spaces, and tractable computation or estimation often leads to strong simplifications. The pursuit of an analytically computed, theoretically reliable, and practically implementable distribution divergence in non-Euclidean spaces remains a formidable challenge. Nevertheless, studying divergence in topological spaces holds great significance across various domains. In the contemporary landscape of machine learning, where complex data structures and manifold learning have become prominent, there arises a compelling need for divergences that are specifically tailored to smooth manifolds. In numerous machine learning applications with such configurations, high-dimensional observations inherently admits a low-dimensional latent representation in Euclidean spaces (cf. the references \citep{KinWel:C2014,AleFisDil:C17}). Therefore, incorporating domain-specific knowledge between the sample space and Euclidean space can help address the challenges associated with non-Euclidean spaces for developing more nuanced and effective approaches to quantify distribution divergence in diverse and intricate data scenarios.

\paragraph{Our approach}
Integral probability metrics (IPMs), as discussed in \citep{Mul:J97}, measures divergence by evaluating the largest difference in expectation between two probability distributions over a class of measurable functions known as witness functions. In this paper, we capitalize on IPMs to derive the TV distance and $\chi^2$-divergence on a topological sample space $\Omega$. This is achieved by identifying two sub-classes within the continuous function space as witness functions, as concluded in \cref{thm:KLDbounds}. 

To make the problem tractable, we introduce the concept of reproducing kernel Hilbert space (RKHS) for the witness functions in IPMs. The choice of kernel is of paramount importance, and we specifically aim for them to be \emph{universal} \citep{SriFukLan:J11,SimSch:J18}. Intuitively, the universality ensures that each distribution admits a unique embedding in RKHSs \citep{MuaFukSri:J17}. While universal kernels exist off-the-shelve for Euclidean space, such as polynomial kernels and Gaussian radial basis function (RBF) kernels \citep{Ste:J02}, specifying an exact form of universal kernels is generally intractable for topological spaces. Addressing this, we extend the universal kernels from Euclidean space to topological spaces, as proposed in \cref{props:ker_composition}. This proposition asserts that the composition of a universal kernel on $\bbR^d$ and a homeomorphism from $\Omega$ to a Euclidean subset is a universal kernel on $\Omega$. 

However, establishing a homeomorphism is a highly challenging task, analytically or empirically, especially when the topology of the input data is unknown. For example, in the context of smooth manifolds \citep{Lee:B12}, the Whitney embedding theorem \citep{Lee:B12} guarantees the existence of a homeomorphism between any $N$-dimensional smooth manifold and some subsets of $\bbR^M$ for some $M\leq 2N$. Nevertheless, there is no known result demonstrating how to identify a homeomorphism solely based on the smooth structures. In response to this difficulty, our paper proposes an effective way to learn a composition kernel on $\Omega$ for the estimation of the TV distance and $\chi^2$-divergence. This provides a practical and robust solution to extend the application of universal kernels to diverse spaces, including topological manifolds.

\paragraph{Related works} 
Kernel embedding, considered one of the most widely studied non-parametric measures of nonlinear dependencies, has garnered substantial attention in the literature \citep{SmoGreSon:C07,SriGreFuk:J10}. Notable related works encompass kernel maximum mean discrepancy (MMD) (also known as the kernel two-sample test) \citep{GreBor:J06}, kernel canonical correlation \citep{BacJor:J02,FukBacGre:J07}, Hilbert-Schmidt norm of covariance operators \citep{GreBouSmo:C05,GreFukTeo:C07}, and measure of conditional dependence based on normalized cross-covariance operators on RKHSs \citep{FukGreSun:C07,GreHerSmo:J05}.  The success of these approaches stems from two key factors. Firstly, embedding using a characteristic kernel uniquely preserves all information about a distribution \citep{FukGreLan:C09}. Secondly, computations on the potentially infinite-dimensional RKHSs can be efficiently carried out through simple Gram matrix operations, thanks to the kernel trick \citep{VerTsuSch:J04}. 

Applying the aforementioned approaches to manifold settings requires kernels specifically tailored for manifolds, an area of exploration that is still in its early stages. The reference \citep{CheXie:J21} characterized the test power of the kernel MMD on a Riemannian submanifold of Euclidean spaces, considering density functions (with respect to the Riemannian volume) that belong to certain classes related to geodesics. Another work \citep{OzaGra:C09} proposed estimating kernel density on a Riemannian manifold using kernels on the Euclidean space with small bandwidth, arguing that the geodesics measured on the manifold closely resemble Euclidean distances when points are in close proximity. The work \citep{JayHarSal:J15} identified positive-definite geodesic exponential kernels on two specific manifolds. However, \citet{FerHau:C16} pointed out that geodesic exponential kernels are only positive-definite for all bandwidths when the input space has strong linear properties. None of these works generalize to kernel methods in topological spaces and ensure universality.

\paragraph{Contributions}
Assuming a topological sample space $\Omega$ that is homeomorphic to certain Euclidean subsets, this paper introduces a practical kernel method for estimating the TV distance and $\chi^2$-divergence for probability measures over $\Omega$. To accomplish this, we propose the learning of a composition kernel, equivalent to universal kernels, from a parametric family of continuous functions mapping from $\Omega$ to $\bbR^d$ as illustrated in \cref{fig:layout}. What distinguishes our work from existing studies is the simultaneous estimation of both the TV distance and the $\chi^2$-divergence, employing a shared learnable kernel, subject to specific technical conditions. Moreover, we conduct a comprehensive statistical analysis of the estimators, providing insights to the choice of hyper-parameters.
\begin{figure}[!htb]
	\centering
	\includegraphics[scale=0.4]{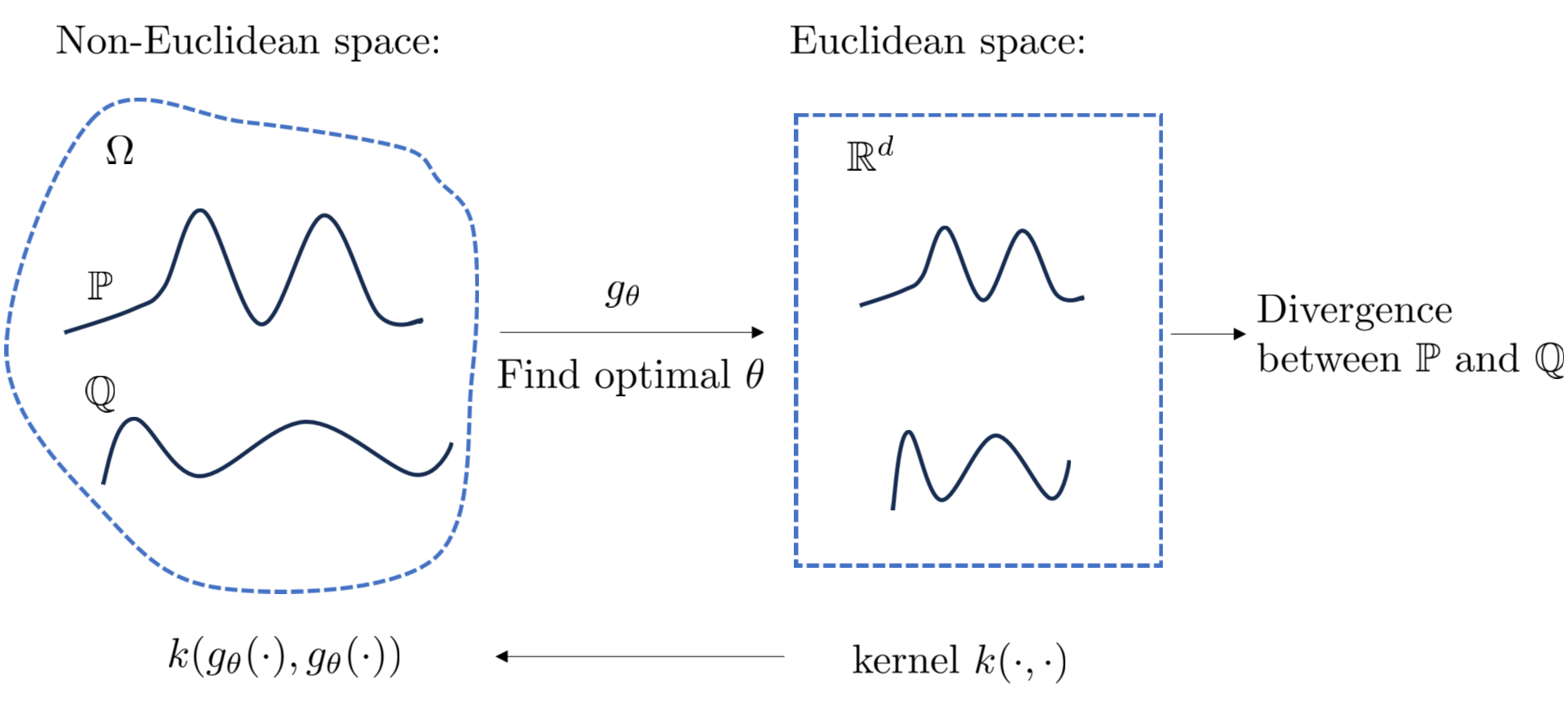}
	\caption{Estimating the divergence between distributions $\bbP$ and $\bbQ$ within a non-Euclidean sample space $\Omega$ through learning a function from $\Omega$ to $\bbR^d$.}
	\label{fig:layout}
\end{figure}

The rest the paper is organized as follows. In \cref{sect:rkhs}, we provide an introduction to RKHSs and universal kernels. Moving on to \cref{sect:ipm_div}, we express the TV distance and $\chi^2$-divergence as IPMs using witness functions within RKHSs. Following this, \cref{sect:ker_div} details the derivation of analytical RKHS formula for the TV distance and $\chi^2$-divergence. This section also introduces a method to learn a transformation function from $\Omega$ to $\bbR^d$ to form a desirable kernel on $\Omega$. In \cref{sect:ker_div_est}, we shift our focus to sample-based kernel estimates for the proposed IPMs, accompanied by a discussion on error probability and asymptotic analysis in the large sample size regime. Finally in \cref{sect:expt}, we test the derived estimators by applying them to an image dataset.

\paragraph{Notations} 
We use $\circ$ to denote function composition and $\otimes$ to denote the tensor product between two functions. Given a probability measure $\bbP$ and random variable $f$, we use $\E_{\bbP}[f] \coloneqq\int f \ud\bbP$ to denote the expectation of $f$ under $\bbP$. The sets of real and natural numbers are $\bbR$ and $\bbN$, respectively. We denote $\ip{}{}$ and $\norm{\cdot}$ as the dot product and Euclidean norm in the Euclidean space. Absolute value is denoted by $\abs{\cdot}$ and $\bA\T$ denotes the transpose of matrix $\bA$. We let $\exp(\cdot)$ be the exponential function. The logarithm $\log$ is natural logarithm. We denote the RKHS associated with a kernel by the calligraphic capital letter of the kernel symbol, e.g., RKHS $\calK$ associated with kernel $k(\cdot,\cdot)$. We denote $\ip{}{}_{\calK}$ and $\norm{\cdot}_{\calK}$ as the kernel inner product and kernel norm in the RKHS $\calK$. We denote $\opnorm{\cdot}$ as the operator norm in RKHS $\calK$, i.e., $\opnorm{\Psi}=\inf\{c\geq 0, \norm*{\Psi f}_{\calK}\leq c\norm*{f}_{\calK}\}$. We let $\bI_n$ be the $n\times n$ identity matrix and $I$ be the identity map. 

\section{Preliminaries}\label{sect:rkhs}
In this section, we revisit the notions of a RKHS and universal kernels. For an in-depth literature review on RKHS, we recommend referring to the works \citep{ManAmb:J15}. Additionally, comprehensive insights into universal kernels can be found in the works \citep{Ste:J02,MicXuZha:J06,SriFukLan:J11}.

\subsection{Review of RKHS}\label{appx:review_RKHS}
Let $(\calE,\scT)$ be a topological space and consider a Hilbert space $\calK$ of functions from $\calE$ to $\bbR$, with the inner product denoted as $\ip*{\cdot}{\cdot}_{\calK}$. The Hilbert space $\calK$ is a RKHS if at each $x\in\calE$, the point evaluation operator $L_x(f) = f(x)$ for $f\in\calK$ is a bounded linear functional, i.e., there exists a positive constant $C$ such that $\norm*{L_x(f)}_{\calK}\leq C\norm*{f}_{\calK}$ for all $f\in\calK$. Then, from the Riesz representation theorem \citep{DebMik:B08}, for each $x\in\calE$, there exists a unique element $\Phi(x)\in\calK$ such that
\begin{align*}
	f(x)=\ip*{f}{\Phi(x)}_{\calK},
\end{align*}
for any $f\in\calK$. This is called the reproducing property. For any $y\in\calE$, we have $\Phi(x)(y)=\ip*{\Phi(x)}{\Phi(y)}_{\calK}$. The kernel associated with $\calK$ is a function $\calE\times\calE\to\bbR$ defined as
\begin{align*}
	k(x,y)
	=\ip*{\Phi(x)}{\Phi(y)}_{\calK},
\end{align*}
for all $x,y\in\calE$. Therefore, $\Phi(x)$ as a function defined on $\calE$ can be written as $\Phi(x)(\cdot)=k(x,\cdot)$. It is easy to see that a kernel is symmetric and positive definite, i.e., $\sum_{i,j=1}^nc_ic_jk(x_i,x_j)\geq 0$ for any $n\in\bbN$, $x_i\in\calE$ and $c_i\in\bbR$ for $ i=1,\ldots,n$. The RKHS can be written as the completion of the linear span of the feature maps in $\calK$:
\begin{align*}
	\calK
	=\overline{\spn}\left\{\Phi(x): x\in\calE\right\}.
\end{align*}

On the other hand, from the Moore-Aronszajn theorem \citep{Aro:J50}, such a positive definite kernel $k(\cdot,\cdot)$ uniquely determines a Hilbert space from $\calE$ to $\bbR$, for which $k(\cdot,\cdot)$ is a reproducing kernel.
In general, a feature map is a map $\phi:\calE\to\calF$, where $(\calF,\ip*{\cdot}{\cdot}_{\calF})$ is a Hilbert space which we will call the feature space. Every feature map defines a kernel via
\begin{align}\label{ker_feat_ip}
	k(x,y)=\ip*{\phi(x)}{\phi(y)}_{\calF}.
\end{align}
Letting $\calF=\calK$, it is obvious that $\Phi(x)$ is a feature map of $x$ in $\calK$. We call it the \emph{canonical feature map}. A RKHS has infinitely many associated feature maps.

\subsection{Universal Kernels}
For a kernel $k:\calE\times\calE\to\bbR$, we denote $\Phi(x)$ as the canonical feature map for $x\in\calE$, defined as $\Phi(x)(\cdot)=k(x,\cdot)$. We begin with a brief introduction to the universal kernels \citep{Ste:J02,MicXuZha:J06,SriFukLan:J11}. Suppose $\calK$ is a RKHS associated with reproducing kernel $k(\cdot,\cdot)$, defined on $\calE\times\calE$. Let $\Omega$ be a compact subset of $\calE$ and $\calK(\Omega)$ be a closed subspace of $\calK$, for which the domain of the functions in $\calK$ is restricted to $\Omega$:
\begin{align}\label{eq:ker_space}
	\calK(\Omega)
	=\overline{\spn}\left\{\evalat{k(x, \cdot)}_\Omega : x\in\calE\right\}.
\end{align} 
Suppose that for any compact subset $\Omega\subset\calE$, the space $\calK(\Omega)$ is dense in $C(\Omega)$, i.e., for any $\epsilon> 0$ and any function $f\in C(\Omega)$, there exists a function $g\in\calK(\Omega)$ such that $\supnorm{f-g}\leq\epsilon$. This is called the universal approximation property. In other words, universal RKHSs can uniformly approximate any continuous function defined on a compact space. A kernel with this property is called a \emph{universal kernel} and the RKHS it induces is said to be universal. 

The reference \citep{Ste:J02} has shown in Corollary 10 and 11 that kernels that can be expanded into its Taylor series or in a pointwise convergent Fourier series with positive coefficients are universal w.r.t. the compact subsets of the Euclidean space $\bbR^d$. Concrete examples of such kernels include the Gaussian radial basis function (RBF) kernels like
\begin{subequations}\label{eq:unv_kernels}
	\begin{align} 
		&k(x,y)=\exp\parens*{-\gamma\norm*{x-y}^2},\label{unv_ker_a} \\ 
		\intertext{and infinite polynomial kernels like}
		&k(x,y)=\left(1+\norm*{x-y}^2\right)^{-\gamma},\label{unv_ker_b}
	\end{align}
\end{subequations}
where $x,y\in\bbR^d$, $\gamma>0$, and $\norm*{}$ is the Euclidean norm. The kernels \cref{unv_ker_a,unv_ker_b} are universal on every compact subset of $\bbR^d$. For an insight into the universal approximation property, consider the case $N=1$ and the kernel \cref{unv_ker_a}, which has a feature map in $\ell^2$ space (the space of square-summable sequences) as
\begin{align*}
	\phi(x)
	=\exp(-\gamma x^2)\left[1,\sqrt{\frac{2\gamma}{1!}}x,\sqrt{\frac{(2\gamma)^2}{2!}}x^2,\ldots\right]\T,
\end{align*}
for $x\in\bbR$. The set of all feature maps is the spanning set of the $\ell^2$ space. The canonical feature space $\calK(\Omega)$ can be represented as $\left\{f(x)=\ip*{\phi(x)}{g} : x\in\Omega,g\in\ell^2\right\}$, in which $\ip{}{}$ is the dot product. Therefore, $\calK(\Omega)$ is composed of all infinite degree polynomial functions, which can approximate any continuous functions defined over a compact set, according to the celebrated Stone-Weierstrass approximation theorem \citep{Dud:B02}.

\subsection{Composition Kernel}
Here, we wish to highlight our notation convention: given a kernel $k(\cdot,\cdot)$, we denote the RKHS it induces by the calligraphic capital letter of the kernel symbol, i.e., $\calK$. The kernel inner product is denoted as $\ip*{\cdot}{\cdot}_{\calK}$.

To circumvent the challenging task of explicitly constructing a universal kernel on arbitrary topological spaces, we turn to composition kernels. As depicted in \cref{props:ker_composition}, universal kernels on any topological spaces can be formed by composing a mapping function and universal kernels on other topological spaces.
\begin{Proposition}\label{props:ker_composition}
	Let $(\Omega,\scT)$ and $(\Omega',\scT')$ be two topological spaces. Let $C(\Omega')$ be the space of bounded continuous functions from $\Omega'$ to $\bbR$. Suppose $\Omega$ is compact and $\calK(\Omega')$ is dense in $C(\Omega')$ under the uniform norm. If $g:\Omega\to\Omega'$ is a continuous and bijective mapping, then the set $\left\{f\circ g : f\in\calK(\Omega')\right\}$ is dense in $C(\Omega)$.
\end{Proposition}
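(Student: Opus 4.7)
The plan is to reduce the density statement on $\Omega$ to the assumed density statement on $\Omega'$ by transporting functions along $g$. The key observation is that a continuous bijection from a compact space to a Hausdorff space is automatically a homeomorphism; since $(\Omega',d')$ is a metric space (hence Hausdorff) and $\Omega$ is compact, the inverse $g^{-1}:\Omega'\mapsto\Omega$ is continuous, and $\Omega'=g(\Omega)$ is itself compact.

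Given this, the natural route is as follows. Fix an arbitrary $h\in C(\Omega)$ and $\epsilon>0$. Define the ``pulled-back'' function $\tilde h\coloneqq h\circ g^{-1}:\Omega'\mapsto\bbR$, which is continuous as a composition of continuous maps, hence $\tilde h\in C(\Omega')$. By the hypothesis that $\calK(\Omega')$ is dense in $C(\Omega')$ under the uniform norm, there exists $f\in\calK(\Omega')$ with $\supnorm{\tilde h-f}<\epsilon$, where the supremum is taken over $\Omega'$.

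Now consider $f\circ g$ as a candidate approximant of $h$ in $C(\Omega)$. Using the bijectivity of $g$, we have
\begin{align*}
\sup_{x\in\Omega}\abs{h(x)-f(g(x))}
=\sup_{x\in\Omega}\abs{\tilde h(g(x))-f(g(x))}
=\sup_{y\in\Omega'}\abs{\tilde h(y)-f(y)}
<\epsilon,
\end{align*}
since $g$ maps $\Omega$ onto $\Omega'$. This shows that every $h\in C(\Omega)$ can be uniformly approximated to arbitrary precision by elements of $\set{f\circ g\given f\in\calK(\Omega')}$, which is the desired density.

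The only mildly subtle point is the step asserting that $g^{-1}$ is continuous; this is what allows the pulled-back function $\tilde h$ to lie in $C(\Omega')$ so that the hypothesis of density can be invoked. Once this is established, the rest of the argument is a direct change of variables under a homeomorphism.
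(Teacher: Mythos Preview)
Your proposal is correct and follows essentially the same route as the paper: establish that $g^{-1}$ is continuous (the paper states this directly from compactness and bijectivity, you add the Hausdorff observation), pull back an arbitrary $h\in C(\Omega)$ to $h\circ g^{-1}\in C(\Omega')$, approximate there by some $f\in\calK(\Omega')$, and compose back with $g$. The only cosmetic difference is that you obtain equality in the change-of-variables step (using surjectivity of $g$) whereas the paper writes an inequality; both suffice.
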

The proof is provided in \cref{appx:ker_comp}.

The implication of \cref{props:ker_composition} is that for any topological space $\Omega$ homeomorphic to a subset of Euclidean space, the composition of a homeomorphism $g:\Omega\to g(\Omega)\subseteq\bbR^d$ and a universal kernel $k(\cdot,\cdot)$ in $\bbR^d$ yields a kernel $k(g(\cdot),g(\cdot))$ that is universal on $\Omega$. The remaining challenge is to identify a suitable mapping function $g$ for constructing a kernel on $\Omega$.

\section{IPMs and Divergence}\label{sect:ipm_div}
In this section, we introduce two specific classes of witness functions for IPMs in continuous function space, leading to the TV distance and $\chi^2$-divergence, respectively.

We begin with reviewing the definition of IPM and $f$-divergence. Consider a \emph{compact} topological space $(\Omega,\scT)$, where $\scT$ is a topology on $\Omega$. The Borel $\sigma$-algebra $\calB(\Omega)$ associated with $\Omega$ is a measurable space generated by the open subsets of $\Omega$. Note any space with a topology has a Borel $\sigma$-algebra. Let $\bbP$ and $\bbQ$ be two Borel probability measures over the measurable space $(\Omega, \calB(\Omega))$ satisfying the following assumption. 

\begin{Assumption}\label{asm:abs_conts}
	The probability measure $\bbP$ is absolutely continuous w.r.t. $\bbQ$, i.e., for any $A\in\calB(\Omega)$ such that $\bbQ(A)=0$, we have $\bbP(A)=0$. 
\end{Assumption}
With \cref{asm:abs_conts}, the Radon-Nikodym theorem \citep[Theorem 5.5.4]{Dud:B02} says that there exists a non-negative real-valued integrable function $f:\Omega\to\bbR$ such that
\begin{align*}
	\P(A)=\int_{A}f \ud\bbQ,\ \forall A\in\calB(\Omega).
\end{align*}
The function $f$ is called the Radon-Nikodym derivative and is denoted as $\ddfrac{\bbP}{\bbQ}$.

The $f$-divergence originated from information theory \citep{CovTho:B05} measures how one probability measure is different from a reference probability measure and is defined as
\begin{align*}
	\KLD{\bbP}{\bbQ}=\int_{\Omega}f\left(\ddfrac{\bbP}{\bbQ}\right) \ud\bbP
\end{align*}
under \cref{asm:abs_conts}, where $f:[0,\infty)\to\bbR$ is a a convex function such that $f(1)=0$. The $f$-divergence is non-negative and zero if and only if $\bbP=\bbQ$ almost surely. The TV distance $\TV(\bbP,\bbQ)$ and chi-squared divergence $\chi^2(\bbP,\bbQ)$ correspond to $f(x)=\ofrac{2}\abs{x-1}$ and $f=(x-1)^2$, respectively. Note that the $f$-divergence may not be a metric on the space of probability measures since it can be asymmetric, i.e., $\chi^2(\bbP,\bbQ)\neq\chi^2(\bbQ,\bbP)$. 

On the other hand, the IPM, originally investigated by \citet{Mul:J97} is an integral probability metric that finds the maximal difference between the mean value of a set of witness functions w.r.t. two probability measures.
\begin{Definition}\label{defn:mmd}
	Let $\calF$ be a class of functions $f:\Omega\to\bbR$ measurable w.r.t. $\calB(\Omega)$. The integral probability metric (IPM) between two probability measures $\bbP$ and $\bbQ$ is
	\begin{align}\label{eq:mmd}
		\mmd[\calF,\bbP,\bbQ]=\sup_{f\in\calF}\left(\E_{\bbP}[f]-\E_{\bbQ}[f]\right).
	\end{align}
\end{Definition}

The functions in $\calF$ are called witness functions since they witness the difference in distributions. Obviously, the quality of the IPM as a statistic depends on the class $\calF$ of functions that define it. It has been shown in \citep[Lemma 9.3.2]{Dud:B02} that $\bbP=\bbQ$ if and only if $\E_{\bbP}[f]=\E_{\bbQ}[f]$ for all bounded continuous functions $f:\Omega\to\bbR$. Therefore, if $\calF$ includes the space of bounded continuous functions, the IPM is a metric on the space of probability measures. 

However, the space of bounded continuous functions is too large to be manageable. It is of great interest to identify proper subsets of witness functions that can lead to desirable properties. 

Let $C(\Omega)$ be the space of continuous functions from $\Omega$ to $\bbR$. Since $\Omega$ is compact, all continuous functions are bounded, and we denote
\begin{align*}
	C_{\infty}(\Omega)=\left\{f\in C(\Omega) : \norm*{f}_{\infty}\leq 1\right\}, 
\end{align*}
where $\supnorm{f}=\max_{x\in\Omega}\abs{f(x)}$ is the uniform norm. The set $C_\infty(\Omega)$ is a possible choice of $\calF$ in \cref{defn:mmd}.

If the Borel space $\calB(\Omega)$ is assigned a \emph{finite} regular measure $\nu$ (i.e., $\nu(\Omega)<\infty$), we may define an inner product for the elements in $C(\Omega)$ as
\begin{align*}
	\ip*{f}{g}_{\nu}=\int_{\Omega}fg \ud\nu,\ \forall f,g\in C(\Omega).
\end{align*}
This produces a Hilbert space $\left(C(\Omega),\ip*{\cdot}{\cdot}_{\nu}\right)$ of functions from $\Omega$ to $\bbR$. A norm operator $\norm*{\cdot}_{\nu}=\sqrt{\ip*{\cdot}{\cdot}_{\nu}}$ is naturally induced by the inner product associated with the Hilbert space. Then we may let $\calF$ in \cref{defn:mmd} be the unit ball of the square-integrable functions in $C(\Omega)$ w.r.t. measure $\nu$, denoted as
\begin{align}\label{eq:cls_l2_norm}
	C_2(\Omega,\nu)=\left\{f\in C(\Omega) : \norm*{f}_{\nu}\leq 1\right\}.
\end{align}
Please distinguish the inner product $\ip*{\cdot}{\cdot}_{\nu}$ (where $\nu$ is a measure) in \cref{eq:cls_l2_norm} from the kernel inner product $\ip*{\cdot}{\cdot}_{\calK}$ in the following contexts to avoid confusions. Note that $C_{\infty}(\Omega)$ is a subset of $C_2(\Omega,\nu)$ when $\nu$ is a normalized measure with $\nu(\Omega)=1$. 

By choosing $\calF=C_{\infty}(\Omega)$ or $\calF=C_2(\Omega,\bbQ)$ for $\mmd[\calF,\bbP,\bbQ]$, we obtain $\TV(\bbP,\bbQ)$ and $\chi^2(\bbP,\bbQ)$, as specified in \cref{thm:KLDbounds}.
\begin{Theorem}\label{thm:KLDbounds}
	Suppose \cref{asm:abs_conts} holds and assume (a $\bbQ$-version of) $\ddfrac{\bbP}{\bbQ}$ is continuous on $\Omega$. We have
	\begin{align} \label{eq:mmd_tv}
		&\mmd[C_{\infty}(\Omega),\bbP,\bbQ]=\TV(\bbP,\bbQ),\\ \label{eq:mmd_chi2}
		&\mmd^2[C_2(\Omega,\bbQ),\bbP,\bbQ]=\chi^2(\bbP,\bbQ).
	\end{align}
\end{Theorem}
The proof is provided in \cref{appx:ipm_tv_chi2}. 

The IPMs $\mmd[C_{\infty}(\Omega),\bbP,\bbQ]$ and $\mmd[C_2(\Omega,\bbQ),\bbP,\bbQ]$ are the objects of our study.

\subsection{Witness Functions in RKHS}\label{sect:mmd_in_rkhs}
However, working with the witness function spaces $C_2(\Omega,\bbQ)$ and $C_{\infty}(\Omega)$ for computing IPMs \cref{eq:mmd_tv} and \cref{eq:mmd_chi2} is computationally infeasible. An alternative approach involves employing the RKHSs associated with universal kernels for the witness functions of the IPMs. By replacing the continuous function space from $\Omega$ to $\bbR$ with an RKHS $\calK$ (restricted to $\Omega$), we derive surrogate witness functions in the RKHS as replacement for the witness functions $C_{\infty}(\Omega)$ and $C_2(\Omega,\bbQ)$, respectively:
\begin{align}\label{eq:cls_max_norm_ker}
	&\calK_{\infty}(\Omega)=\left\{f\in\calK(\Omega) : \supnorm{f}\leq 1\right\} \\ \label{eq:cls_l2_norm_ker}
	&\calK_2(\Omega,\nu)=\left\{f\in\calK(\Omega) : \norm*{f}_{\nu}\leq 1\right\},
\end{align}

As demonstrated in \cref{cor:CK}, it is without loss of generality to work with witness functions in universal RKHSs to compute the IPMs \cref{eq:mmd_tv,eq:mmd_chi2}.
\begin{Lemma}\label{cor:CK}
	If $\calK$ is a universal RKHS, we have
	\begin{align*}
		&\mmd[\calK_{\infty}(\Omega),\bbP,\bbQ]
		=\mmd[C_{\infty}(\Omega),\bbP,\bbQ]
		=\TV(\bbP,\bbQ),\nn
		&\mmd[\calK_2(\Omega,\nu),\bbP,\bbQ]
		=\mmd[C_2(\Omega,\nu),\bbP,\bbQ]
		=\chi^2(\bbP,\bbQ).
	\end{align*}
\end{Lemma}
In the subsequent sections, we present methods to compute and estimate $\mmd[\calK_{\infty}(\Omega),\bbP,\bbQ]$ and $\mmd[\calK_2(\Omega,\bbQ),\bbP,\bbQ]$ by leveraging the RKHS properties.

To begin with, we provide a summary of the notations frequently used in the subsequent context in \cref{tab:notations}. This aims to enhance readability and prevent potential confusions.
\begin{table}[!htbp]
	\centering
	\caption{List of Notations}\label{tab:notations}
	\begin{tabular}{ll} 
		\toprule
		\multicolumn{1}{c}{Symbols} & \multicolumn{1}{c}{Description} \\
		\midrule
		$k_{\theta}$ & Composition kernel $k(g_{\theta}(\cdot),g_{\theta}(\cdot))$, with $g_{\theta}$ parameterized by $\theta\in\Theta$.\\
		$M_k$, $M_{k_{\theta}}$ & Maximum value of kernel $k$ and $k_{\theta}$, respectively. \\
		$\mmd[\calF,\bbP,\bbQ]$ & IPM where $\calF$ is the witness function class. \\
		$\mmd^2[\calF,\bbP,\bbQ]$ & Square of $\mmd[\calF,\bbP,\bbQ]$. \\
		$\what{\mmd}[\calF,\bbP,\bbQ]$ & Empirical Estimation of $\mmd[\calF,\bbP,\bbQ]$. \\
		$\calK$, $\calK_{\theta}$ & RKHSs associated with kernel $k$ and $k_{\theta}$, respectively. \\
		$\calK_{\infty}(\Omega)$ & Unit ball with $L_1$-norm in $\calK$ (restricted to domain $\Omega$). \\
		$\wtil{\calK}(\Omega)$ & Ball of radius $1/\sqrt{M_k}$ with kernel norm in $\calK$, a proxy for $\calK_{\infty}(\Omega)$. \\
		$\calK_2(\Omega,\bbQ)$ & Unit ball with $L_2$-norm in $\calK$ (restricted to domain $\Omega$). \\ 
		$\calK_2^{\lambda}(\Omega,\bbQ)$ & $\calK_2(\Omega,\bbQ)$ with $\lambda$-regularization in kernel norm, a proxy for $\calK_2(\Omega,\bbQ)$. \\
		\bottomrule
	\end{tabular}
\end{table}

\section{Divergence under RKHS}\label{sect:ker_div}
In this section, we focus on the RKHS representations of $\mmd[\calK_{\infty}(\Omega),\bbP,\bbQ]$ and $\mmd[\calK_{\infty}(\Omega),\bbP,\bbQ]$. Although closed-form solutions do not exist, we establish a lower-bound for $\mmd[\calK_{\infty}(\Omega),\bbP,\bbQ]$, and derive an approximation for $\mmd[\calK_2(\Omega,\bbQ),\bbP,\bbQ]$ with arbitrarily small approximation error. The lower-bound also serves as an objective function for efficiently learning a transformation function from $\Omega$ to $\bbR^d$ to create a composition kernel.

In the rest of this paper, we make the following assumptions.
\begin{Assumption}\label{asm:homeomorphism}
	The topological space $\Omega$ is compact and homeomorphic to a subset of $\bbR^d$.
\end{Assumption}
\begin{Assumption}\label{asm:int_ker}
	The kernel $k(\cdot,\cdot)$ is continuous and bounded on $\Omega$.
\end{Assumption}

We make use of the notion of kernel embedding of density functions \citep{SmoGreSon:C07,SriGreFuk:J10}. Recall our convention that a RKHS $\calK$ is associated with a kernel $k(\cdot,\cdot)$. Suppose we have a RKHS $\calK$ associated with a universal kernel satisfying \cref{asm:int_ker}. 
For each probability measure $\bbP$, consider the following function from $\calK$ to $\bbR$:
\begin{align}\label{eq:mean_rkhs}
	f\in\calK \mapsto \E_{\bbP}[f(X)] =\E_{\bbP}[\ip*{f}{\Phi(X)}_{\calK}],
\end{align}
which is a bounded linear functional since $\E_{\bbP}[k(X,X)]<\infty$ from \cref{asm:int_ker}.
From the Riesz representation theorem \citep[Theorem 3.7.7]{DebMik:B08},  there exists a unique element $\mu_{\bbP}\in\calK$ such that 
$\E_{\bbP}[f(X)]=\ip*{f}{\mu_{\bbP}}_{\calK}$, where
\begin{align}\label{eq:mean_embed}
	\mu_{\bbP}\coloneqq\E_{\bbP}[\Phi(X)],
\end{align}
in which the expectation here is the Bochner integral \citep{HsiEub:15}.
In other words, the mean embedding of the probability measure $\bbP$ is the expectation under $\bbP$ of the canonical feature map. From \citep[Lemma 9.3.2]{Dud:B02}, a universal RKHS ensures that the mean embedding \cref{eq:mean_embed} from the space of probability measures to $\calK$ is an injective mapping. Thus, the probability measure is uniquely identified by an element in a universal RKHS $\calK$. 
For $f\in\calK$, the mean difference in terms of $\bbP$ and $\bbQ$ can be written as
\begin{align}\label{eq:mean_diff}
	\E_{\bbP}[f(X)]-\E_{\bbQ}[f(X)]
	=\ip*{f}{\mu_{\bbP}-\mu_{\bbQ}}_{\calK}. 
\end{align}

\subsection{Lower Bound of $\mmd[\calK_{\infty}(\Omega),\bbP,\bbQ]$}
\label{sect:ker_mmd_lb}
We derive the RKHS representation of $\mmd[\calK_{\infty}(\Omega),\bbP,\bbQ]$. It is noticed that the uniform norm in $\calK_{\infty}(\Omega)$, as defined in \cref{eq:cls_max_norm_ker}, is intractable when applying the kernel trick to derive a closed-form formula. To address this difficulty, we opt for working with a tractable subset of $\calK_{\infty}(\Omega)$, which yields a lower bound of $\mmd[\calK_{\infty}(\Omega),\bbP,\bbQ]$. This involves relaxing the uniform norm in $\calK$ as follows:
\begin{align}\label{eq:max_norm_ub}
	\begin{split}
		\supnorm{f}
		=\max_{x\in\Omega}\abs{f(x)}
		&=\max_{x\in\Omega}\abs{\ip*{f}{\Phi(x)}_{\calK}} \\
		&\leq\max_{x\in\Omega} \norm*{f}_{\calK} \sqrt{\ip*{\Phi(x)}{\Phi(x)}_{\calK}} \\
		&=\norm*{f}_{\calK}\max_{x\in\Omega}\sqrt{k(x,x)},
	\end{split}
\end{align}
where the inequality is due to the Cauchy-Schwarz inequality. Note that $\max_{x\in\Omega}k(x,x)$ is independent of the function $f\in\calK$ and let's denote
\begin{align}\label{eq:M}
	M_k=\max_{x\in\Omega}k(x,x).
\end{align}

In particular, for translation-invariant kernels such as the RBF kernels \cref{unv_ker_a}, $M_k$ is independent of $\Omega$.

Making use of the mean difference \cref{eq:mean_diff} and relaxation \cref{eq:max_norm_ub}, we obtain a lower bound of $\mmd[\calK_{\infty}(\Omega),\bbP,\bbQ]$:
\begin{align}
	\mmd[\calK_{\infty}(\Omega),\bbP,\bbQ]
	&=\sup_{f\in\calK} \braces*{\frac{\E_{\bbP}[f(X)]-\E_{\bbQ}[f(X)]}{\supnorm{f}}} \nn
	&\geq\ofrac{\sqrt{M_k}}\sup_{f\in\calK}\frac{\ip*{f}{\mu_{\bbP}-\mu_{\bbQ}}_{\calK}}{\norm*{f}_{\calK}}\nn
	&=\frac{1}{\sqrt{M_k}}\norm*{\mu_{\bbP}-\mu_{\bbQ}}_{\calK}. \label{mmd:uniform_norm}
\end{align}

The lower-bound \cref{mmd:uniform_norm} can be written as an IPM:
\begin{align}\label{eq:var_lb_mmd}
	\mmd[\wtil{\calK}(\Omega),\bbP,\bbQ]
	=\ofrac{\sqrt{M_k}}\norm*{\mu_{\bbP}-\mu_{\bbQ}}_{\calK},
\end{align}
where the witness function $\wtil{\calK}(\Omega)$ is a ball of radius $1/\sqrt{M_k}$ in RKHS $\calK$, defined as
\begin{align*}
	\wtil{\calK}(\Omega)
	=\left\{f\in\calK(\Omega) : \norm*{f}_{\calK}\leq\frac{1}{\sqrt{M_k}}\right\}.
\end{align*}

The tightness of the lower-bound in \cref{mmd:uniform_norm} is subject to the choice of the kernel $k$, and equality is attained if there exists a point $x\in\Omega$ such that $\Phi(x)\propto\mu_{\bbP}-\mu_{\bbQ}$ (the proof is provided in \cref{appx:supnorm_equality}). Our strategy is to maximize $\mmd[\wtil{\calK}(\Omega),\bbP,\bbQ]$ over a prameterized composition kernel to approximate $\mmd[\calK_{\infty}(\Omega),\bbP,\bbQ]$.


\subsection{Learnable Composition Kernel}\label{sect:composition_ker}
\cref{cor:CK} asserts that a universal RKHS $\calK$ ensures the equivalence $\mmd[\calK_{\infty}(\Omega),\bbP,\bbQ]=\TV(\bbP,\bbQ)$. However,  finding an explicit universal kernel for the sample space  $\Omega$ is challenging because $\Omega$ can be non-Euclidean. To address this difficulty, we firstly form a set of composition kernels on $\Omega$ as candidates:
\begin{align}\label{eq:composition_ker}
	\frakJ=\left\{k_{\theta}\coloneqq k(g_{\theta}(\cdot),g_{\theta}(\cdot))\mid \theta\in\Theta\right\},
\end{align}
where $g_{\theta}:\Omega\mapsto\bbR^d$ is a family of continuous functions from $\Omega$ to $\bbR^d$ parameterized by $\theta$, and $\Theta$ is its parameter space, and $k$ is a fixed universal kernel on Euclidean space $\bbR^d$.

By \cref{props:ker_composition}, a kernel $k_{\theta}\in\frakJ$ achieves universality on $\Omega$ when the associated $g_{\theta}$ is a homeomorphism from $\Omega$ to $g_{\theta}(\Omega)\subseteq\bbR^d$. However, in practical applications, $g_{\theta}$ is often implemented as neural network functions (with $\theta$ representing the trainable weights), making it impracticable to enforce invertibility for all $g_{\theta}$ in $\frakJ$. 

For any $g_{\theta}$ in $\frakJ$ that is not invertible, the corresponding kernel $k_{\theta}$ fails to be universal. As a consequence, the associated RKHS $\calK_{\theta}(\Omega)$ may not be dense in $C(\Omega)$, resulting in
\begin{align*}
	\mmd[{\calK_{\theta}}_{\infty}(\Omega),\bbP,\bbQ]
	\leq\mmd[C_{\infty}(\Omega),\bbP,\bbQ]=\TV(\bbP,\bbQ).
\end{align*}

Following \cref{mmd:uniform_norm}, for any $k_{\theta}\in\frakJ$, we have
\begin{align*}
	\mmd[\wtil{\calK}_{\theta}(\Omega),\bbP,\bbQ]
	\leq\mmd[{\calK_{\theta}}_{\infty}(\Omega),\bbP,\bbQ]
	\leq\TV(\bbP,\bbQ).
\end{align*}

To approximate $\TV(\bbP,\bbQ)$, we maximize the above lower-bound across the composition kernel candidates in $\frakJ$:
\begin{align*}
	\sup_{k_{\theta}\in\frakJ}\mmd[\wtil{\calK}_{\theta}(\Omega),\bbP,\bbQ].
\end{align*}

It is not difficult to identify the conditions under which maximizing $\mmd[\wtil{\calK}_{\theta}(\Omega),\bbP,\bbQ]$ reaches its upper limit.
\begin{Proposition}\label{props:var_mmd}
	Let $\frakJ$ be defined as \cref{eq:composition_ker} and $\Phi(\cdot)$ be the feature map associated with the fixed universal kernel $k$ on $\bbR^d$. For $\frakJ$, if there exists a $g_{\theta}$ and a point $x\in\Omega$ such that $g_{\theta}$ is a homeomorphism from $\Omega$ to $g_{\theta}(\Omega)\subseteq\bbR^d$ and $\Phi\circ g_{\theta}(x)\propto\E_{\bbP}[\Phi\circ g_{\theta}(X))]-\E_{\bbQ}[\Phi\circ g_{\theta}(X)]$, the following equality holds:
	\begin{align}\label{eq:var_mmd_equality}
		\sup_{k_{\theta}\in\frakJ}\mmd[\wtil{\calK}_{\theta}(\Omega),\bbP,\bbQ]
		=\TV(\bbP,\bbQ).
	\end{align}
\end{Proposition}

By maximizing $\mmd[\wtil{\calK}_{\theta}(\Omega),\bbP,\bbQ]$, we also obtain a composition kernel:
\begin{align}\label{eq:argsup_var_mmd}
	k_{\theta}=\argsup_{k_{\theta}\in\frakJ}\mmd[\wtil{\calK}_{\theta}(\Omega),\bbP,\bbQ].
\end{align}

It is crucial to note that $g_{\theta}$ being a homeomorphism in \cref{props:var_mmd} is one of the sufficient conditions for \cref{eq:var_mmd_equality} to hold. Therefore, even if \cref{eq:var_mmd_equality} holds, the optimally acquired composition kernel in \cref{eq:argsup_var_mmd} is not necessarily guaranteed to be universal.

However, our intention is to reuse the acquired composition kernel from \cref{eq:argsup_var_mmd} for the computation of $\chi^2(\bbP,\bbQ)$ in \cref{sect:ker_mmd_ub}. As a reminder from \cref{cor:CK}, $\mmd[\calK_2(\Omega,\nu),\bbP,\bbQ]$ is equivalent to $\chi^2(\bbP,\bbQ)$ if $\calK$ is a universal RKHS. Hence, our concern arises due to the potential lack of universality in the obtained kernel from \cref{eq:argsup_var_mmd} when applied to $\mmd[\calK_2(\Omega,\bbQ),\bbP,\bbQ]$. Interestingly, \cref{props:ker_ub_mmd} unveils that the kernel in \cref{eq:argsup_var_mmd} attains equivalence to a universal kernel for computing $\mmd[\calK_2(\Omega,\bbQ),\bbP,\bbQ]$ when \cref{eq:var_mmd_equality} holds.

\subsection{Approximation of $\mmd[\calK_2(\Omega,\bbQ),\bbP,\bbQ]$}
\label{sect:ker_mmd_ub}
In what follows, we derive the RKHS formula for $\mmd[\calK_2(\Omega,\bbQ),\bbP,\bbQ]$. 

To begin with, we introduce some basic concepts in functional analysis \citep{Con:B90}. For a given RKHS $\calK$, the operator norm of a linear operator $T:\calK\to\calK$ is defined to be $\norm*{T}_{\calK}=\sup_{\norm*{f}_{\calK}\leq 1}\norm*{Tf}_{\calK}$. The operator $T$ is invertible if and only if $Tf=0$ implies $f=0$. For $f,g\in\calK$, the tensor product $f\otimes g:\calK\to\calK$ is defined to be
\begin{align*}
	(f\otimes g)h\coloneqq f\ip*{g}{h}_{\calK},
\end{align*}
for any $h\in\calK$.

Next we proceed to characterize the class of witness functions $\calK_2(\Omega,\bbQ)=\left\{f\in\calK:\E_{\bbQ}[f^2(X)]\leq 1\right\}$.
Utilizing the definition of tensor product and reproducing property, for any $f,g\in\calK$, we have 
\begin{align*}
	\E_{\bbQ}[f(X)g(X)]
	=\E_{\bbQ}[\ip*{f}{\Phi(X)\otimes\Phi(X) g}_{\calK}].
\end{align*}
If \cref{asm:int_ker} is satisfied, the above function $\calK\times\calK\to\bbR$ is a bounded bilinear functional. By \citet[Theorem 4.3.13]{DebMik:B08}, there exists a unique operator $\Psi$ such that the quadratic form of this bilinear functional can be written as
\begin{align*}
	\E_{\bbQ}[f^2(X)]
	=\ip*{f}{\Psi f}_{\calK},
\end{align*}
where $\Psi=\E_{\bbQ}[\Phi(X)\otimes\Phi(X)]$.

It is verifiable that $\Psi$ is a self-adjoint and positive operator, and it is also a compact operator. To see this, 
letting $(e_i)_{i\geq 1}$ be an orthonormal basis of $\calK$, we have
\begin{align*}
	\Tr(\Psi)=\sum_{i\geq 1}\ip*{e_i}{\Psi e_i}_{\calK}
	&=\E_{\bbQ}[\ip*{\Phi(X)}{e_i}_{\calK}^2]\nn
	&\leq\E_{\bbQ}[\norm*{\Phi(X)}_{\calK}^2]=\E_{\bbQ}[k(X,X)]<\infty.
\end{align*}

Since $\Psi$ is compact, $\Psi$ is non-invertible. This limitation hinders the derivation of a closed-form formula for $\mmd[\calK_2(\Omega,\bbQ),\bbP,\bbQ]$. To address this, we introduce a perturbation to $\Psi$:
\begin{align*}
	\Psi_{\lambda}=\Psi+\lambda I,
\end{align*}
where $I$ denotes the identity operator and $\lambda>0$ is a regularization weight. Now, $\Psi_{\lambda}$ is an invertible and self-adjoint positive operator. Note every positive operator has a unique positive square root, which is also self-adjoint. The inverse and square root of $\Psi_{\lambda}$ are defined as
\begin{align*}
	&\Psi_{\lambda}^{-1}\Psi_{\lambda}f=f, \nn
	&\Psi_{\lambda}^{1/2}\Psi_{\lambda}^{1/2}f=\Psi_{\lambda}f,
\end{align*}
for any $f\in\calK$.

Subsequently, we introduce a proxy for $\calK_2(\Omega,\bbQ)$:
\begin{align*}
	\calK_2^{\lambda}(\Omega,\bbQ)
	\coloneqq\left\{f\in\calK(\Omega):\ip*{f}{\Phi_{\lambda}f}_{\calK}\leq 1\right\}.
\end{align*}

Note $\ip*{f}{\Phi_{\lambda}f}_{\calK}=\E_{\bbQ}[f^2(X)]+\lambda\norm*{f}^2_{\calK}$, and $\calK_2^{\lambda}(\Omega,\bbQ)$ can be interpreted as a regularized version of $\calK_2(\Omega,\bbQ)$.

We use $\mmd[\calK_2^{\lambda}(\Omega,\bbQ),\bbP,\bbQ]$ to approximate $\mmd[\calK_2(\Omega,\bbQ),\bbP,\bbQ]$, and we have
\begin{align}
	\mmd[\calK_2^{\lambda}(\Omega,\bbQ),\bbP,\bbQ]
	&=\sup_{f\in\calK} \braces*{\frac{\E_{\bbP}[f(X)]-\E_{\bbQ}[f(X)]}{\left(\E_{\bbQ}[f^2(X)]+\lambda\norm*{f}_{\calK}^2\right)^{1/2}}} \nn
	&=\sup_{f\in\calK}\frac{\ip*{f}{\mu_{\bbP}-\mu_{\bbQ}}_{\calK}}{\ip*{f}{\Psi_{\lambda}f}_{\calK}^{1/2}}\nn
	&=\sup_{f\in\calK}\frac{\ip*{\Psi_{\lambda}^{-1/2}\Psi_{\lambda}^{1/2}f}{\mu_{\bbP}-\mu_{\bbQ}}_{\calK}}{\ip*{\Psi_{\lambda}^{1/2}f}{\Psi_{\lambda}^{1/2}f}_{\calK}^{1/2}}\nn
	&=\sup_{f\in\calK}\frac{\ip*{\Psi_{\lambda}^{1/2}f}{\Psi_{\lambda}^{-1/2}(\mu_{\bbP}-\mu_{\bbQ})}_{\calK}}{\ip*{\Psi_{\lambda}^{1/2}f}{\Psi_{\lambda}^{1/2}f}_{\calK}^{1/2}}\nn
	&=\sup_{f\in\calK}\frac{\ip*{f}{\Psi_{\lambda}^{-1/2}(\mu_{\bbP}-\mu_{\bbQ})}_{\calK}}{\ip*{f}{f}_{\calK}^{1/2}}\nn
	&=\norm*{\Psi_{\lambda}^{-1/2}(\mu_{\bbP}-\mu_{\bbQ})}_{\calK}\nn
	&=\sqrt{\ip*{\mu_{\bbP}-\mu_{\bbQ}}{\Psi_{\lambda}^{-1}(\mu_{\bbP}-\mu_{\bbQ})}_{\calK}}. \label{eq:ub_kld}
\end{align}

\subsubsection{Approximation Error}\label{sect:chi2_appro_err}
It is of great interest to quantify the discrepancy between $\mmd[\calK_2(\Omega,\bbQ),\bbP,\bbQ]$ and its proxy $\mmd[\calK^{\lambda}_2(\Omega,\bbQ),\bbP,\bbQ]$.  

First of all, as $\lambda$ vanishes, $\mmd[\calK_2^{\lambda}(\Omega,\bbQ),\bbP,\bbQ]$ converges to $\mmd[\calK_2(\Omega,\bbQ),\bbP,\bbQ]$:
\begin{align}\label{eq:lim_chi2}
	\lim_{\lambda\to 0}\mmd[\calK_2^{\lambda}(\Omega,\bbQ),\bbP,\bbQ]=\mmd[\calK_2(\Omega,\bbQ),\bbP,\bbQ].
\end{align}
The derivation of equation \cref{eq:lim_chi2} is provided in \cref{proof:lim_chi2}.

Furthermore, $\mmd[\calK_2^{\lambda}(\Omega,\bbQ),\bbP,\bbQ]$ serves as a lower bound for $\mmd[\calK_2(\Omega,\bbQ),\bbP,\bbQ]$:
\begin{align*}
	\mmd^2[\calK^{\lambda}_2(\Omega,\bbQ),\bbP,\bbQ]
	&=\sup_{f\in\calK}\abs*{\frac{\ip*{f}{\mu_{\bbP}-\mu_{\bbQ}}_{\calK}^2}{\ip*{f}{\Psi_{\lambda}f}_{\calK}}}\nn
	&=\sup_{f\in\calK}\abs*{\frac{\ip*{f}{\mu_{\bbP}-\mu_{\bbQ}}_{\calK}^2}{\ip*{f}{\Psi f}_{\calK}+\lambda\norm*{f}_{\calK}^2}}\nn
	&\leq\dfrac{M_k}{M_k+\lambda}\mmd^2[\calK_2(\Omega,\bbQ),\bbP,\bbQ],
\end{align*}
where the last inequality arises from $\norm*{f}_{\calK}^2\geq\ofrac{M_k}\ip*{f}{\Psi f}_{\calK}$ due to $\opnorm{\Psi}\leq M_k$ (cf. \cref{appx:chi2_convas}).

It is clear that decreasing $\lambda$ (while keeping $M_k$ unchanged) theoretically diminishes the approximation error. However, in practice, a small $\lambda$ adversely affects the convergence rate of the estimation of $\mmd^2[\calK^{\lambda}_2(\Omega,\bbQ),\bbP,\bbQ]$. Choosing an appropriate $\lambda$ involves a trade-off between minimizing the approximation error and ensuring the reliability of the estimator, which will be elaborated in \cref{sect:chi2_est}.

\subsubsection{Applying Composition Kernel}
According to \cref{cor:CK}, a universal $\calK$ ensures $\mmd^2[\calK_2(\Omega,\bbQ),\bbP,\bbQ]=\chi^2(\bbP,\bbQ)$. As mentioned earlier, finding a universal kernel on the non-Euclidean sample space $\Omega$ is challenging. An alternative approach is to maximize $\mmd[\calK_2^{\lambda}(\Omega,\bbQ),\bbP,\bbQ]$ within a set of composition kernels, as demonstrated in \cref{sect:composition_ker}. However, compared to the maximization of $\mmd[\wtil{\calK}_{\theta}(\Omega),\bbP,\bbQ]$ in \cref{sect:composition_ker}, the computational complexity of the optimization in this case is significant, mainly due to the inverse of $\Psi_{\lambda}$. These computational challenges will be manifested in \cref{sect:ker_div_est}. 

Considering the computational challenges, a natural question arises: Can we reuse the composition kernel learned in \cref{sect:composition_ker} as a surrogate for a universal kernel, simplifying the computational problem? \cref{props:ker_ub_mmd} provides a positive answer.
\begin{Theorem}\label{props:ker_ub_mmd}
	Let $\frakJ$ be a set of kernels as defined in \cref{eq:composition_ker}, satisfying \cref{asm:int_ker}. Assume
	\begin{align*}
		\sup_{k_{\theta}\in\frakJ}\mmd[\wtil{\calK}_{\theta}(\Omega,\bbQ),\bbP,\bbQ]=\TV(\bbP,\bbQ).
	\end{align*}
	Let $\calK$ is the RKHS associated with kernel $k=\argsup_{k_{\theta}\in\frakJ}\mmd[\wtil{\calK}_{\theta}(\Omega,\bbQ),\bbP,\bbQ]$. Then we have
	\begin{align*}
		\lim_{\lambda\to 0}\mmd^2[\calK^{\lambda}_2(\Omega,\bbQ),\bbP,\bbQ]=\chi^2(\bbP,\bbQ).
	\end{align*}
\end{Theorem}

In essence, \cref{props:ker_ub_mmd} assets that the composition kernel obtained by maximizing the lower-bound \cref{eq:var_lb_mmd} is same as a universal kernel when applied to computation of \cref{eq:ub_kld}. This holds true even if the optimally acquired composition kernel is not universal. These theoretical insights enable us to overcome the difficulty of constructing a universal kernel on $\Omega$ in this case.

\section{Empirical Estimation}\label{sect:ker_div_est}
In this section, we derive sample-based estimates for $\sup_{k\in\frakJ}\mmd[\wtil{\calK}(\Omega),\bbP,\bbQ]$ and $\mmd[\calK^{\lambda}_2(\Omega,\bbQ),\bbP,\bbQ]$, respectively. These estimates serve as a practical approximation for $\TV(\bbP,\bbQ)$ and $\chi^2(\bbP,\bbQ)$. Furthermore, we present the statistical properties of these estimates, including the asymptotic analysis and the probability of error.

To start with, let's consider two sets of i.i.d. data samples $\{X_i\}_{i=1}^m$ and $\{Y_i\}_{i=1}^n$ drawn from distributions $\bbP$ and $\bbQ$, respectively. Additionally, we introduce i.i.d. random variables $X$ and $X'$, both following the distribution $\bbP$. Similarly, let $Y$ and $Y'$ be i.i.d. random variables following the distribution $\bbQ$.

\subsection{Approximation of TV Distance}
Firstly, we compute the estimate of $\mmd[\wtil{\calK}_{\theta}(\Omega),\bbP,\bbQ]$ based on its RKHS representation in \cref{eq:var_lb_mmd}. Then, this estimate is maximized over a set of composition kernels to approximate the TV distance, as illustrated in \cref{sect:composition_ker}.

We need a family of composition kernels $\frakJ=\{k_{\theta}(\cdot,\cdot):\theta\in\Theta\}$ as our search space for an optimal kernel, where $k_{\theta}(\cdot,\cdot)=k(g_{\theta}(\cdot),g_{\theta}(\cdot))$ and $k$ is a fixed kernel on Euclidean space. To ensure practicality, we let $g_{\theta}:\Omega\mapsto\bbR^d$ be a vector function parameterized by $\theta$, where $\theta=(\theta_1,\ldots,\theta_q)\in\bbR^q$, and $g_{\theta}$ is twice differentiable w.r.t. $\theta$, and the kernel $k$ on Euclidean space is also twice differentiable.

\subsubsection{Estimation of $\sup_{k_{\theta}\in\frakJ}\mmd[\wtil{\calK}_{\theta}(\Omega),\bbP,\bbQ]$}
The mean embeddings $\mu_{\bbP}$ and $\mu_{\bbQ}$ have sample-average estimates as
\begin{align*}
	&\what{\mu}_{\bbP}=\frac{1}{m}\sum_{i=1}^m\Phi(X_i)\ \text{and}\
	\what{\mu}_{\bbQ}=\frac{1}{n}\sum_{i=1}^n\Phi(Y_i).
\end{align*}
This yields the following estimate for the squared kernel mean distance \citep{GreBor:J06}:
\begin{align}
	\what{\mmd}^2[\wtil{\calK}_{\theta}(\Omega),\bbP,\bbQ]
	&=\ofrac{M_{k_{\theta}}}\norm*{\what{\mu}_{\bbP}-\what{\mu}_{\bbQ}}_{\calK_{\theta}}^2\nn
	&=\ofrac{M_{k_{\theta}}}\left(\frac{1}{m^2}\sum_{i=1}^m\sum_{j=1}^mk_{\theta}(X_i,X_j)+\frac{1}{n^2}\sum_{i=1}^n\sum_{i=1}^nk_{\theta}(Y_i,Y_j)\right.\nn
	&\left.\quad-\frac{2}{mn}\sum_{i=1}^m\sum_{i=1}^nk_{\theta}(X_i,Y_j)\right). \label{eq:ker_dist_est}
\end{align}

Note $\mmd^2[\wtil{\calK}_{\theta}(\Omega),\bbP,\bbQ]$ can be expanded as
\begin{align}\label{eq:ker_dist}
	\ofrac{M_{k_{\theta}}}\left(\E_{X,X'}k_{\theta}(X,X')+\E_{Y,Y'}k_{\theta}(Y,Y')-2\E_{X,Y}k_{\theta}(X,Y)\right).
\end{align}

This formulation enables a U-statistics estimate in linear computational time \citep{GreBor:J06}:
\begin{align*}
	\ofrac{M_{k_{\theta}}}\left(\frac{1}{m}\sum_{i=1}^{m-1}k_{\theta}(X_i,X_{i+1})+\frac{1}{n}\sum_{i=1}^{n-1}k_{\theta}(Y_i,Y_{i+1})-\frac{2}{n}\sum_{i=1}^nk_{\theta}(X_i,Y_i)\right).
\end{align*}

To approximate the TV distance, we maximize the estimate $\what{\mmd}[\wtil{\calK}_{\theta}(\Omega),\bbP,\bbQ]$ over $\theta\in\Theta$:
\begin{align}\label{tv_est}
	\sup_{\theta\in\Theta}\what{\mmd}[\wtil{\calK}_{\theta}(\Omega),\bbP,\bbQ].
\end{align}

The estimate in linear time facilitates an efficient application of the min-batch stochastic gradient when optimizing $\what{\mmd}[\wtil{\calK}_{\theta}(\Omega),\bbP,\bbQ]$.

The statistical properties of \cref{tv_est} are summarized in \cref{thm:tv_stats}.
\begin{Theorem}\label{thm:tv_stats}
	Denote the estimation error as
	\begin{align*}
		\Delta(\bbP,\bbQ)=
		\sup_{\theta\in\Theta}\what{\mmd}[\wtil{\calK}_{\theta}(\Omega),\bbP,\bbQ]
		-\sup_{\theta\in\Theta}\mmd[\wtil{\calK}_{\theta}(\Omega),\bbP,\bbQ].
	\end{align*}
	Let $\ell(\theta;x,y)=\dfrac{k_{\theta}(x,y)}{M_{k_{\theta}}}$. Assume $\ell(\theta;x,y)$ is Lipschitz w.r.t. $\theta$ uniformly over all $x,y\in\Omega$. That is, there exists a constant $C<\infty$ such that $\abs*{\ell(\theta;x,y)-\ell(\theta';x,y)}\leq C\norm*{\theta-\theta'}$ for all $x,y\in\Omega$. Then, as $m,n\to\infty$, we have
	\begin{enumerate}[(a)]
		\item $\Delta(\bbP,\bbQ)\convas 0$ if $\dfrac{m}{n}$ converges. 
		\item For any $t>0$, $\Delta(\bbP,\bbQ)$ achieves the following convergence rate:
		\begin{align*}
			&\P(\abs*{\Delta(\bbP,\bbQ)}\geq 3t+\dfrac{2}{m}+\dfrac{2}{n})\nn
			&\leq\exp\left(-\dfrac{m}{2}t^2\right)+\exp\left(-\dfrac{n}{2}t^2\right)+\exp\left(-\dfrac{mn}{m+n}t^2\right).
		\end{align*}
	\end{enumerate}
\end{Theorem}

\cref{thm:tv_stats} implies that we need balanced sample numbers for $\bbP$ and $\bbQ$ to maintain the consistency of the estimate.

\subsubsection{Learnable Composition Kernel}\label{sect:est_theta}
Through maximizing the estimate $\what{\mmd}[\wtil{\calK}_{\theta}(\Omega),\bbP,\bbQ]$, we obtain an estimate of the optimal kernel parameter $\what{\theta}\in\Theta$ that determines the optimal composition kernel. As the composition kernel associated with $\what{\theta}$ will be used for the computation of $\chi^2$-divergence, there is a keen interest in characterizing the statistical properties of the estimated parameters $\what{\theta}$ for the compositional kernel, as outlined below.

To facilitate our analysis, we denote the objective function of the optimization and its estimate w.r.t. the kernel parameter $\theta$ as
\begin{align*}
	&L(\theta)=\mmd[\wtil{\calK}_{\theta}(\Omega),\bbP,\bbQ],\nn
	&\what{L}(\theta)=\what{\mmd}[\wtil{\calK}_{\theta}(\Omega),\bbP,\bbQ].
\end{align*}
Additionally, we let the optimal kernel parameter and its estimate be
\begin{align*}
	&\theta^*=\argsup_{\theta\in\Theta}L(\theta),\nn
	&\what{\theta}=\argsup_{\theta\in\Theta}\what{L}(\theta).
\end{align*}

Furthermore, we denote $\nabla L(\theta)$ and $\nabla^2L(\theta)$ as the gradient and Hessian of $L(\theta)$ w.r.t. $\theta$, respectively. Recall the notation of operator norm $\opnorm{\Psi}$, i.e., $\opnorm{\Psi}=\inf\{c\geq 0, \norm*{\Psi f}_{\calK}\leq c\norm*{f}_{\calK}\}$.
\begin{Theorem}\label{thm:theta_stats}
	Assume $\what{\theta}$ and $\theta^*$ are critical points of $\what{L}(\theta)$ and $L(\theta)$, respectively. That is $\nabla\what{L}(\what{\theta})=0$ and $\nabla L(\theta^*)=0$.
	\begin{enumerate}[(a)]
		\item When $m=n$, we have
		\begin{align*}
			\sqrt{n}\left(\what{\theta}-\theta^*\right)
			\convd\N{0}{4\left(\nabla^2L(\theta^*)\right)^{-1}\bSigma\left(\nabla^2L(\theta^*)\right)^{-1}},
		\end{align*}
		where $\bSigma=\cov\left(\E_{X',Y'}[\nabla h(\theta^*;X,Y,X',Y')]\right)$, with the function $h$ defined as
		\begin{align*}
			h(\theta;X,Y,X',Y')
			=\odfrac{M_{k_{\theta}}}\left(k_{\theta}(X,X')-2k_{\theta}(X,Y')+k_{\theta}(Y,Y')\right).
		\end{align*}
		\item Let $\ell(\theta;x,y)=\dfrac{k_{\theta}(x,y)}{M_{k_{\theta}}}$. Assume $\norm*{\nabla^2\ell(\theta;x,y)}\geq\gamma$ and $\abs*{\ppfrac{\ell(\theta;x,y)}{\theta_i}}\leq\tau_i$ for all $\theta\in\Theta$ and $x,y\in\Omega$. Then for any $t>0$, we have
		\begin{align*}
			\P(\norm*{\what{\theta}-\theta^*}_2\geq\gamma^{-1}\sqrt{\sum_{i=1}^q\left(3t+\dfrac{4\tau_i}{n}\right)^2})
			\leq 3\sum_{i=1}^q\exp\left(-\dfrac{n}{2}\dfrac{t^2}{\tau_i^2}\right).
		\end{align*}
	\end{enumerate}
\end{Theorem}
Note the result presented in \cref{thm:theta_stats} (b) assumes $m=n$ for simplicity, while the derivation is conducted under the more general condition $m\neq n$. 

From \cref{thm:theta_stats} (b), it is evident that a larger $\gamma$ and smaller $\tau_i$ result in a faster convergence speed. To illustrate the impact of the design of $g_{\theta}$, consider the RBF kernel, i.e., $\ell(\theta;x,y)=\exp\left(-\norm*{g_{\theta}(x)-g_{\theta}(y)}_2^2\right)$ (since $M_{k_{\theta}}=1$ for any $\theta\in\Theta$). In this case, we have
\begin{align*}
	\nabla\ell(\theta;x,y)
	\leq\left(\ppfrac{g_{\theta}(x)}{\theta}-\ppfrac{g_{\theta}(y)}{\theta}\right)\left(g_{\theta}(x)-g_{\theta}(y)\right).
\end{align*}
Thus, limiting the scale of the output of $g_{\theta}(x)$ and enhancing the smoothness of $g_{\theta}$ allow us to achieve a small $\tau_i$.

\subsection{Approximation of $\chi^2$-Divergence}\label{sect:chi2_est}
In this section, we derive the estimate for $\mmd[\calK^{\lambda}_2(\Omega,\bbQ),\bbP,\bbQ]$ by leveraging its RKHS representation as presented in \cref{eq:ub_kld}. In addition, we explore the factors impacting the error probability.

\subsubsection{Estimation of $\mmd[\calK^{\lambda}_2(\Omega,\bbQ),\bbP,\bbQ]$}
The mean estimate of the linear operator $\Psi$ in \cref{eq:ub_kld} is given by:
\begin{align}\label{eq:LO_est}
	\what{\Psi}
	=\ofrac{n}\sum_{i=1}^n\Phi(Y_i)\otimes\Phi(Y_i).
\end{align}

Let $\bPhi=\left[\Phi(Y_1),\ldots,\Phi(Y_n)\right]$ be a row vector of feature maps, and define $\bPhi\T$ as a linear operator such that for $f\in\calK$,
\begin{align*}
	\bPhi\T f
	=\left[\ip*{f}{\Phi(Y_1)}_{\calK},\ldots,\ip*{f}{\Phi(Y_n)}_{\calK}\right]\T
	=\left[f(Y_1),\ldots,f(Y_n)\right]\T.
\end{align*}

Then, with a slight abuse of notation, we can rewrite \cref{eq:LO_est} as $\what{\Psi}=\odfrac{n}\bPhi\bPhi\T$, for which $\bPhi\bPhi\T f$ is the matrix product of $\bPhi$ and $\bPhi\T f$ for $f\in\calK$, interpreting $\what{\Psi}$ as an $n\times n$ matrix. Applying the Woodbury matrix identity, we have
\begin{align*}
	\what{\Psi}_{\lambda}^{-1}
	&=\inv{\lambda{I}+\what{\Psi}}\nn
	&=\inv{\lambda{I}+\ofrac{n}\bPhi\bPhi\T}\nn
	&=\ofrac{\lambda}\bI_n-\ofrac{\lambda}\bPhi\left(n\lambda\bI_n+\bPhi\T\bPhi\right)^{-1}\bPhi\T,
\end{align*}
where $\bI_n$ is the $n\times n$ identity matrix, and $\bPhi\T\bPhi$ denotes the application of $\bPhi\T$ on each column of $\bPhi$. It turns out that $\bPhi\T\bPhi=\bPhi_{\bbQ}$ is a matrix whose $(i,j)$-th entry is defined as $\subs{\bPhi_{\bbQ}}{i,j}=k(Y_i,Y_j)$ for $1\leq i,j\leq n$.

Now, substituting the estimates $\what{\mu}_{\bbP}$, $\what{\mu}_{\bbQ}$ and $\what{\Psi}_{\lambda}^{-1}$ for $\mu_{\bbP}$, $\mu_{\bbQ}$ and $\Psi_{\lambda}^{-1}$ in \cref{eq:ub_kld}, we obtain
\begin{align}\label{kld_ub_est}
	&\what{\mmd}^2[\calK^{\lambda}_2(\Omega,\bbQ),\bbP,\bbQ]\nn
	&=\ip*{\what{\mu}_{\bbP}-\what{\mu}_{\bbQ}}{\what{\Psi}_{\lambda}^{-1}(\what{\mu}_{\bbP}-\what{\mu}_{\bbQ})}_{\calK}\nn
	&=\ofrac{\lambda}\norm*{\what{\mu}_{\bbP}-\what{\mu}_{\bbQ}}_{\calK}^2-\ofrac{\lambda}\bw\T\left(n\lambda\bI_n+\bPhi_{\bbQ}\right)^{-1}\bw,
\end{align}
where $\bw\T=\left[w_1,\ldots,w_n\right]\in\bbR^n$ with the $i$-th element defined as
\begin{align*}
	w_i=\odfrac{m}\sum_{j=1}^mk(X_i,Y_j)-\odfrac{n}\sum_{j=1}^nk(Y_i,Y_j).
\end{align*} 

Comparing the expressions \cref{eq:ker_dist_est,kld_ub_est}, it is noted that \cref{kld_ub_est} involves an additional term that requires matrix inversion, significantly increasing  the computational complexity. Therefore, maximizing it over a set of composition kernels to locate the best kernel is not efficient. As aforementioned, the optimal strategy is to reuse the composition kernel obtained from \cref{sect:est_theta}. The validity of this approach is supported by \cref{props:ker_ub_mmd} in theory.

\subsubsection{Statistical Analysis}
The statistical properties of $\what{\mmd}^2[\calK^{\lambda}_2(\Omega,\bbQ),\bbP,\bbQ]$ are outlined in \cref{thm:chi2_conv}.
\begin{Theorem}\label{thm:chi2_conv}
	Denote the estimation error as
	\begin{align*}
		\Delta(\bbP,\bbQ)
		=\what{\mmd}^2[\calK^{\lambda}_2(\Omega,\bbQ),\bbP,\bbQ]-{\mmd}^2[\calK^{\lambda}_2(\Omega,\bbQ),\bbP,\bbQ].
	\end{align*}
	If \cref{asm:int_ker} is satisfied, as $m,n\to\infty$, we have
	\begin{enumerate}[(a)]
		\item $\Delta(\bbP,\bbQ)\convas 0$ if $\dfrac{n}{m}$ converges.
		\item When $m=n$, we have
		\begin{align*}
			\sqrt{n}\Delta(\bbP,\bbQ)\convd\N{0}{4\sigma^2},
		\end{align*}
		where $\sigma^2=\var\left(\ip*{\Phi(X)-\Phi(Y)}{\Psi_{\lambda}^{-1}(\mu_{\bbP}-\mu_{\bbQ})}_{\calK}\right)$.
		\item Denote $\alpha=\norm*{\Psi_{\lambda}^{-1}\mu}_{\calK}$ and $\beta=\norm*{\what{\Psi}_{\lambda}^{-1}\what{\mu}}_{\calK}$. Let $m=n$. For any $t_1>0$ ,$t_2>0$ and $t_3>0$, we have
		\begin{align*}
			&\P(\abs*{\Delta(\bbP,\bbQ)}\geq t_1+t_2+t_3)\nn
			&\leq\exp\left(-\dfrac{2nt_1^2}{M_k^2\alpha^2\beta^2}\right)+2\exp\left(-\dfrac{nt_2^2}{2M_k\beta^2}\right)+2\exp\left(-\dfrac{nt_3^2}{2M_k\alpha^2}\right).
		\end{align*}
	\end{enumerate}
\end{Theorem}

Note the result presented in \cref{thm:chi2_conv} (c) assumes $m=n$ for simplicity, while the derivation is conducted under the more general condition $m\neq n$. 

In the ensuing discussion, we delve into the implications of the error probability presented in \cref{thm:chi2_conv} (c). 
To facilitate our analysis, we set the error probability to be a constant by letting $t_1=\dfrac{M_k\alpha\beta}{\sqrt{n}}$, $t_2=\dfrac{\sqrt{M_k}\beta}{\sqrt{n}}$ and $t_3=\dfrac{\sqrt{M_k}\alpha}{\sqrt{n}}$. With high probability, we can assert the following bound on the estimation error:
\begin{align}\label{chi2_err_lb}
	\abs*{\Delta(\bbP,\bbQ)}
	\leq\dfrac{M_k\alpha\beta}{\sqrt{n}}+\dfrac{\sqrt{M_k}\alpha}{\sqrt{n}}+\dfrac{\sqrt{M_k}\beta}{\sqrt{n}}.
\end{align}

Several observations can be made by examining \cref{chi2_err_lb}:
\begin{itemize}
	\item In considering the worst-case scenario for the error bound \cref{chi2_err_lb} achieved by substituting the following bounds of $\alpha$ and $\beta$ into \cref{chi2_err_lb}: $\alpha\leq\lambda^{-1}\sqrt{M_k}$ and $\beta\leq\lambda^{-1}\sqrt{M_k}$ (cf. \cref{appx:chi2_convas}), we get
	\begin{align*}
		\abs*{\Delta(\bbP,\bbQ)}
		\leq\dfrac{M_k^2\lambda^{-2}}{\sqrt{n}}+\dfrac{M_k\lambda^{-1}}{\sqrt{n}}+\dfrac{M_k\lambda^{-1}}{\sqrt{n}}.
	\end{align*} 
	
	It is evident that decreasing $M_k$ or increasing $\lambda$ diminishes the estimation error. However, this reduction comes with a concomitant enlargement of the approximation error, as illustrated by the approximation discrepancy presented in \cref{sect:chi2_appro_err}:
	\begin{align} \label{chi2:appr_err}
		\mmd^2[\calK^{\lambda}_2(\Omega,\bbQ),\bbP,\bbQ]
		\leq\dfrac{M_k}{M_k+\lambda}\mmd^2[\calK_2(\Omega,\bbQ),\bbP,\bbQ].
	\end{align}
	
	For instance, when letting $\lambda=M_k=1$, the estimation error decreases at the rate of $\sqrt{n}$ as $n$ increases. However, it is important to note that $\mmd^2[\calK^{\lambda}_2(\Omega,\bbQ),\bbP,\bbQ]$ only lower-bounds half the value of $\chi^2$-divergence. To conclude, for a fixed number of samples, there is an inherent trade-off between the reliability of the estimator and the approximation error.
	
	\item On the other hand, it is noteworthy that $\alpha$ is a deterministic variable, and its magnitude is positively correlated with ${\mmd}[\calK^{\lambda}_2(\Omega,\bbQ),\bbP,\bbQ]$, as demonstrated below. Let $(e_p,\lambda_p)_{p\geq 1}$ denote the eigenvalues, arranged in non-increasing order, and the corresponding eigenfunctions of $\Psi$ on RKHS $\calK$, where $\lambda_p\geq 0$ since $\Psi$ is a positive operator. Consequently, we have:
	\begin{align*}
		&\alpha^2=\norm*{\Psi_{\lambda}^{-1}\mu}_{\calK}^2
		=\sum_{p\geq 1}(\lambda_p+\lambda)^{-2}\ip*{e_p}{\mu}_{\calK}^2,\ \text{and} \nn
		&{\mmd}^2[\calK^{\lambda}_2(\Omega,\bbQ),\bbP,\bbQ]
		=\sum_{p\geq 1}(\lambda_p+\lambda)^{-1}\ip*{e_p}{\mu}_{\calK}^2.
	\end{align*}
	
	It is clear from the above expressions that when ${\mmd}^2[\calK^{\lambda}_2(\Omega,\bbQ),\bbP,\bbQ]$ is small, $\ip*{e_p}{\mu}_{\calK}^2$ tends to be small for those small $\lambda_p$, simultaneously leading to a small $\alpha$. This implies that the estimation error bound in \cref{thm:chi2_conv} (c) adapts to the magnitude of the true value, allowing us to ignore the impact of $\alpha$ on the estimation error.
	
	As for $\beta=\norm*{\what{\Psi}_{\lambda}^{-1}\what{\mu}}_{\calK}$, which is tied to an estimation, its reduction contributes to minimizing the estimation error. Therefore, an ideal kernel should render $\beta$ small. Motivated by this, we can seek a kernel by maximizing
	\begin{align*}
		\norm*{\what{\Psi}_{\lambda}^{1/2}\what{\mu}}_{\calK}^2
		=\odfrac{n}\sum_{i=1}^n\left(\odfrac{m}\sum_{j=1}^mk(X_i,X_j)-\odfrac{n}\sum_{j=1}^nk(X_i,Y_j)\right)^2.
	\end{align*}
	
	Alternatively, this quantity can also be employed as a regularization term in the maximization of \cref{eq:ker_dist_est}.
\end{itemize}

\section{Numerical Experiments}\label{sect:expt}
In this section, we evaluate the effectiveness of our proposed estimators \cref{tv_est,kld_ub_est} designed for approximating the TV distance and $\chi^2$-divergence on non-Euclidean space, by applying them to the UTK face dataset \citep{ZhaSonQi:C17}.

The UTK face dataset \citep{ZhaSonQi:C17} consists of over $20,000$ face images annotated with gender. In our experiment, we vary the discrepancy between two image domains and estimate the domain discrepancy. In the beginning, each of the two domains contains 1200 female images, sampled from the UTK dataset. We gradually replace the female images in one domain with male images, intentionally increasing the discrepancy between the two domains. At each step, we perform the estimation $20$ times. The results are depicted in \cref{fig:TV,fig:ChiSquare,fig:MINE}, where the horizontal axis representing the percentage of female images that are replaced by male images. A larger replacement ratio corresponds to a greater discrepancy.

We adopt the Gaussian kernel as the universal kernel on Euclidean space. The neural network function $g_{\theta}:\Omega\to\bbR^d$ employed for the composition kernel is selected as resnet18 \citep{HeZhaRena:C16}. The estimate of the TV distance is obtained by maximizing the objective function in \cref{eq:ker_dist_est} w.r.t. $g_{\theta}$. As previously mentioned, we reuse the composition kernel associated with the optimal $g_{\theta}$ for computing the estimate in \cref{kld_ub_est} for approximating the $\chi^2$-divergence.

We compare our estimators with two other methods: the plain kernel method (direct application of the Gaussian kernel) and the parametric method called mutual information neural estimator (MINE) \citep{BelBarRaj:C18}. As depicted in \cref{fig:TV,fig:ChiSquare}, the plain kernel method fails to capture the increasing trend of the discrepancy between two image domains, while our semi-parametric estimators exhibit a smooth performance. This is because the image data are not inherently compatible with the Euclidean distance, whereas our learnable function $g_{\btheta}$ for composition kernel implicitly captures the underlying topology of the input data. The performance of MINE, shown in \cref{fig:MINE}, suffers from high uncertainty and includes some outliers that are not depicted in the figure.

\begin{figure}[!htb]
	\centering
	\begin{minipage}{.32\textwidth}
		\centering
		\includegraphics[width=\textwidth]{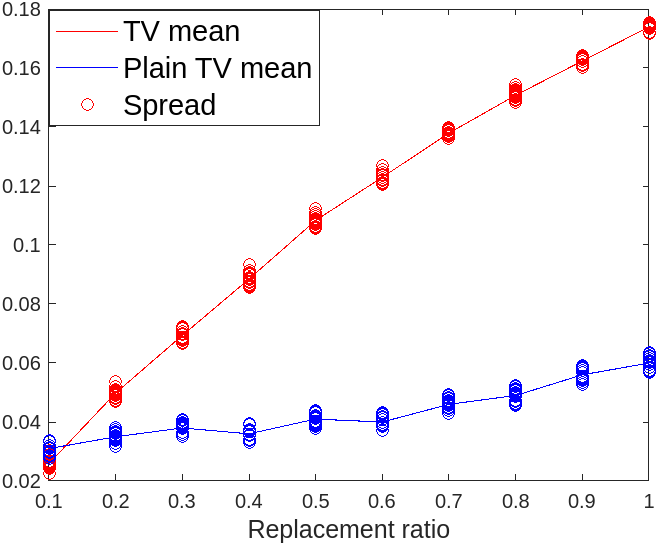}
		\caption{Total variation.}
		\label{fig:TV}
	\end{minipage}
	\begin{minipage}{.32\textwidth}
		\centering
		\includegraphics[width=\textwidth]{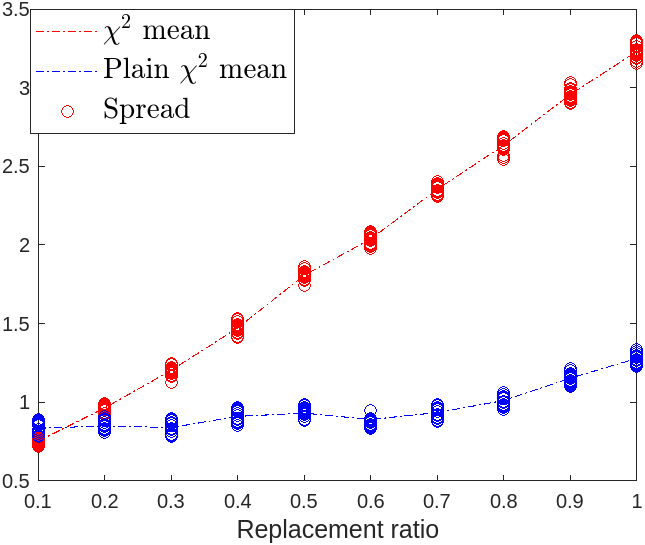}
		\caption{$\chi^2$-divergence.}
		\label{fig:ChiSquare}
	\end{minipage}
	\begin{minipage}{.32\textwidth}
		\centering
		\includegraphics[width=\textwidth]{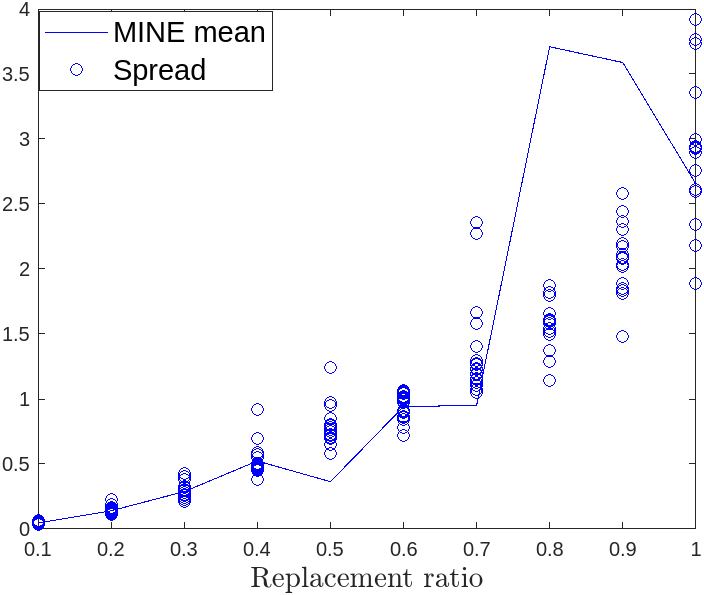}
		\caption{MINE estimator.}
		\label{fig:MINE}
	\end{minipage}
\end{figure}

In addition to the plain kernel method and the MINE method discussed in the main article, we use the Wasserstein distance (WD) \citep{ArjChiBot:C17} and the $k$-nearest neighbor ($k$-NN) method \citep{WanKulVer:J09} to measure the divergence between two face image domains. For our experiments, we selected a value of $k=1000$ for the $k$-NN method. To ensure the robustness and reliability of the results, each experiment was independently conducted 20 times.

As depicted in \cref{fig:wgan}, the divergence measured by WD exhibits volatility, similar to the MINE method. This behavior could be attributed to the heuristic implementation of WD. The neural network function used for WD estimation requires constant Lipschitz continuity. In practice, we truncate the weights of the neural network to approximate a continuous function with a Lipschitz constant. 

The $k$-NN method is typically employed with continuous distributions in Euclidean space. Hence, it is not surprising to observe its limited effectiveness when applied directly to the raw image domain, as shown in \cref{fig:kNN}. However, when we project raw images onto a Euclidean space using our proposed method and subsequently apply $k$-NN to the feature data in the Euclidean space, it performs well. This demonstrates the capability of our approach to learn a reliable transformation function from the non-Euclidean manifold to the Euclidean space.

\begin{figure}[!htb]
	\centering
	\begin{minipage}{.45\textwidth}
		\centering
		\includegraphics[width=\textwidth]{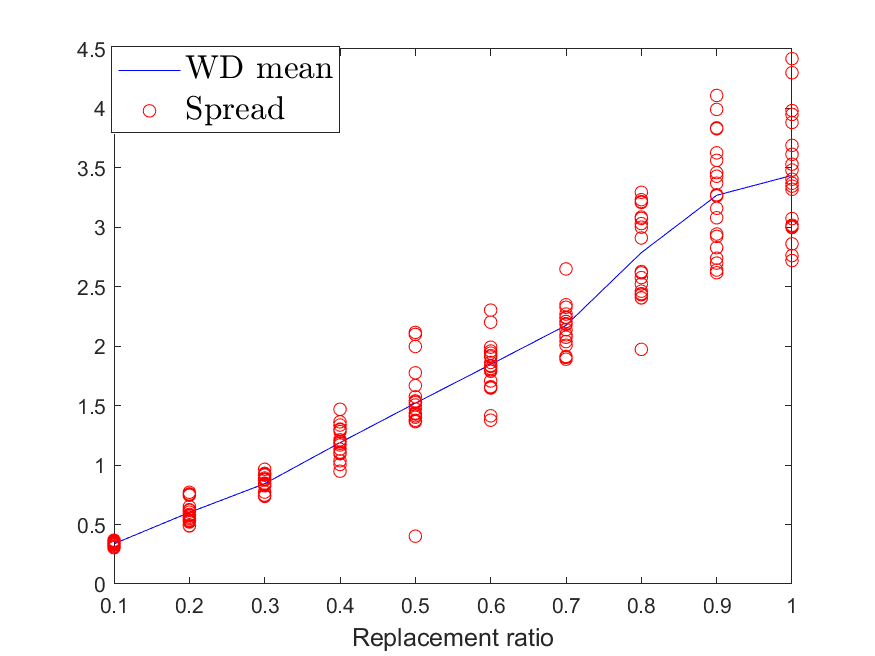}
		\caption{Wasserstein Distance.}
		\label{fig:wgan}
	\end{minipage}
	\begin{minipage}{.45\textwidth}
		\centering
		\includegraphics[width=\textwidth]{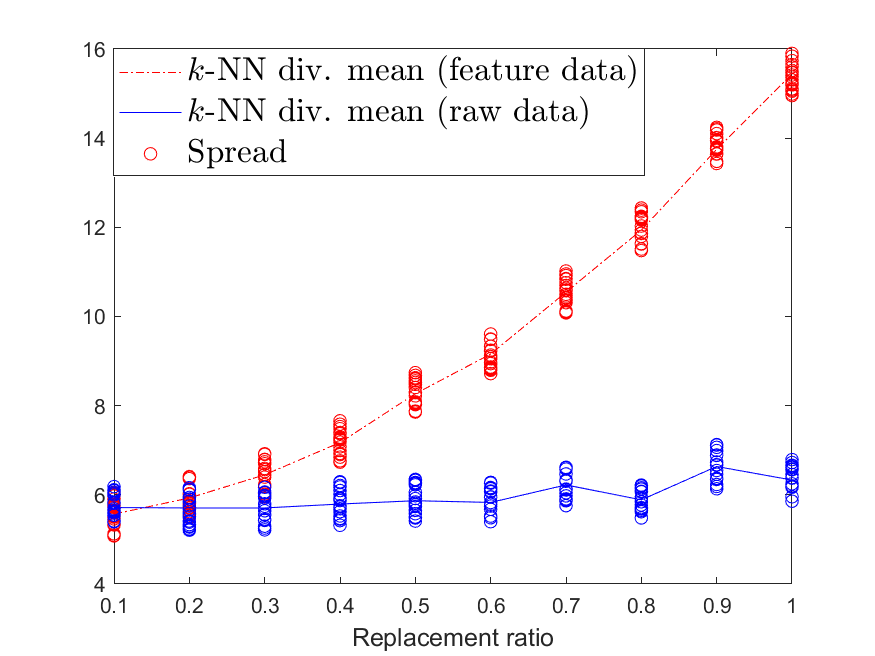}
		\caption{k-NN Divergence.}
		\label{fig:kNN}
	\end{minipage}
\end{figure}

\section{Conclusion}
\label{sect:concl}

We have introduced two IPMs for computing the TV distance and $\chi^2$-divergence on topological spaces that are homeomorphic to Euclidean subsets. In our approach, we have derived closed-form solutions within RKHSs for the proposed IMPS. This involves combining a universal kernel on Euclidean space with a function from topological spaces to Euclidean space. We have performed the statistical analysis for the empirical estimators of the proposed estimators.    

The statistical-computational trade-offs associated with the proposed estimators are left open.


\appendix
\section{Proof of \cref{thm:KLDbounds}}\label[Appendix]{appx:ipm_tv_chi2}
The total variation (TV) distance between probability measures $\bbP$ and $\bbQ$ is defined as 
\begin{align*}
	\TV(\bbP,\bbQ)\coloneqq 2\sup_{A\in\calB(\Omega)}\abs{\bbP(A)-\bbQ(A)},
\end{align*}
where $\calB(\Omega)$ is the Borel $\sigma$-algebra associated with $\Omega$.

We prove $\mmd[C_{\infty}{(\Omega)},\bbP,\bbQ]=\TV(\bbP,\bbQ)$. Denote $\calG=\left\{f : \supnorm{f}\leq 1\right\}$ as a set of bounded functions measurable w.r.t. $\bbQ$. Letting $\calF=\calG$ in $\mmd(\calF,\bbP,\bbQ)$, we have
\begin{align*}
	\mmd(\calG,\bbP,\bbQ)
	&=\sup_{f\in \calG}\left\{\int_{\Omega}f\left(\ddfrac{\bbP}{\bbQ}-1\right)\ud\bbQ\right\}\\
	&=\int_{\Omega}\abs*{\ddfrac{\bbP}{\bbQ}-1}\ud\bbQ\\
	&=\int_{\Omega}\left(\ddfrac{\bbP}{\bbQ}-1\right)^+ \ud\bbQ + \int_{\Omega}\left(\ddfrac{\bbP}{\bbQ}-1\right)^- \ud\bbQ\\	
	&=\TV(\bbP,\bbQ),
\end{align*}
where $X^+=\max\{X,0\}$, $X^-=-\min\{X,0\}$. Note the supremum in $\mmd(\calG,\bbP,\bbQ)$ is attained at 
\begin{align*}
	f=
	\begin{cases}
		1,\ &\text{for}\ \ddfrac{\bbP}{\bbQ}\geq 1, \\
		-1,\ &\text{otherwise}.
	\end{cases}
\end{align*}
It is easy to verify that there exists a sequence of functions in $C_{\infty}(\Omega)$ that can uniformly approximate such $f$. As a consequence, we have
\begin{align*}
	\mmd[C_{\infty}{(\Omega)},\bbP,\bbQ]=\mmd(\calG,\bbP,\bbQ)=\TV(\bbP,\bbQ).
\end{align*}

Now we prove $\chi^2(\bbP,\bbQ)=\mmd^2[C_2(\Omega,\bbQ),\bbP,\bbQ]$. We assume that both the probability measures $\bbP$ and $\bbQ$ are absolutely continuous w.r.t. a measure $\nu$ associated to the norm in \cref{eq:cls_l2_norm} and $\ddfrac{\bbP}{\nu}$ and $\ddfrac{\bbP}{\nu}$ are continuous on $\Omega$. From the Cauchy-Schwarz inequality, we have
\begin{align*}
	\mmd[C_2(\Omega,\nu),\bbP,\bbQ]
	&=\sup_{f\in C_2(\Omega,\nu)}\int_{\Omega} (f \ud\bbP - f \ud\bbQ) \nn
	&=\sup_{f\in C_2(\Omega,\nu)}\int_{\Omega}f\left(\ddfrac{\bbP}{\nu}-\ddfrac{\bbQ}{\nu}\right) \ud\nu \nn
	&=\norm*{\ddfrac{\bbP}{\nu}-\ddfrac{\bbQ}{\nu}}_{\nu}.
\end{align*}
Letting $\nu=\bbQ$, we have
\begin{align*}
	\mmd^2[C_2(\Omega,\bbQ),\bbP,\bbQ]
	&=\norm*{\ddfrac{\bbP}{\bbQ}-1}^2_{\bbQ}\nn
	&=\int_{\Omega}\left(\ddfrac{\bbP}{\bbQ}-1\right)^2 \ud\bbQ \nn
	&=\int_{\Omega}\ddfrac{\bbP}{\bbQ} \ud\bbP-1
	=\chi^2(\bbP,\bbQ).
\end{align*}
The proof is complete.

The TV distance and $\chi^2$-divergence can be linked to the KL divergence \citep{BreHub:J78,Ver:C14} as: 
\begin{align*}
	&\TV(\bbP,\bbQ)\leq 2\sqrt{1-\exp(-D_{\mathrm{KL}}(\bbP\ \|\ \bbQ))},\nn
	&\chi^2(\bbP,\bbQ)\geq\exp(D_{\mathrm{KL}}(\bbP\ \|\ \bbQ)).
\end{align*}

\section{Proofs of \cref{cor:CK}}\label[Appendix]{appx:Proofs_equiv}

We need a preliminary lemma.
\begin{Lemma}\label{lemma:preserve_dense}
	If $\calF'$ is dense in $\calF$, we have $\mmd[\calF',\bbP,\bbQ]=\mmd[\calF,\bbP,\bbQ]$.
\end{Lemma}
\begin{proof}
Let $M=\mmd[\calF,\bbP,\bbQ]$.
For any $\epsilon>0$, there exists $f\in\calF$ such that
\begin{align*}
	\abs{\E_{\bbP}[f(X)]-\E_{\bbQ}[f(X)]-M}\leq\epsilon/3.
\end{align*}
In addition, there exists $g\in\calF'$ such that $\norm*{g-f}_{\infty}\leq\epsilon/3$.
We then have
\begin{align*}
	&\abs{\E_{\bbP}[g(X)]-\E_{\bbQ}[g(X)]-M} \\
	&\leq\abs{\E_{\bbP}[g(X)-f(X)]}+\abs{\E_{\bbQ}[g(X)-f(X)]}+\abs{\E_{\bbP}[f(X)]-\E_{\bbQ}[f(X)]-M}
	\leq\epsilon.
\end{align*}
Since $\epsilon$ is arbitrary, the proof of \cref{lemma:preserve_dense} is complete.
\end{proof}

Now it suffices to show (1) $\calK_{\infty}(\Omega)$ is dense in $C_{\infty}(\Omega)$, and (2) $\calK_2(\Omega,\nu)$ is dense in $C_2(\Omega,\nu)$.

(1) Let $f\in C_{\infty}(\Omega)$. For any $\epsilon \in (0,1)$, since $(1-\epsilon/2)f\in C(\Omega)$ and $\calK(\Omega)$ is dense in $C(\Omega)$, there exists $g\in\calK(\Omega)$, such that
\begin{align*}
	\supnorm{g-(1-\epsilon/2)f}\leq\epsilon/2.
\end{align*}
Then, by the triangle inequality, we have 
\begin{align*}
	\supnorm{g}
	\leq\supnorm{(1-\epsilon/2)f}+\supnorm{g-(1-\epsilon/2)f)}
	\leq 1,
\end{align*}
which implies $g\in\calK_{\infty}(\Omega)$. Finally, we have
\begin{align*}
	\supnorm{g-f}
	\leq\supnorm{g-(1-\epsilon/2)f}+\supnorm{f-(1-\epsilon/2)f}
	\leq\epsilon.
\end{align*}

(2)	Let $f\in C_2(\Omega,\nu)$. For any $\epsilon\in(0,1)$, since $(1-\epsilon/2)f\in C(\Omega)$, there exists $g\in\calK(\Omega)$, such that
\begin{align*}
	\supnorm{g-(1-\epsilon/2)f}
	\leq\epsilon/(2\nu(\Omega)).
\end{align*}
Then, we have
\begin{align*}
	\norm*{g-(1-\epsilon/2)f}_{\nu}
	\leq\nu(\Omega)\supnorm{g-(1-\epsilon/2)f}
	\leq\epsilon/2,
\end{align*}
Moreover, 
\begin{align*}
	\norm*{g}_{\nu}
	\leq\norm*{(1-\epsilon/2)f}_{\nu}+\norm*{g-(1-\epsilon/2)f}_{\nu}
	\leq 1,
\end{align*}
which implies $g\in\calK_2(\Omega,\nu)$. Finally, we have
\begin{align*}
	\norm*{g-f}_{\nu}
	\leq\norm*{g-(1-\epsilon/2)f}_{\nu}+\norm*{f-(1-\epsilon/2)f}_{\nu}
	\leq\epsilon.
\end{align*}

\section{Proof of \cref{props:ker_composition}}\label[Appendix]{appx:ker_comp}
Since $g:\Omega\to\Omega'$ is bijective and continuous and $\Omega$ is compact, $g^{-1}:\Omega'\to\Omega$ is continuous. For any $f\in C(\Omega)$, $f\circ g^{-1}$ is also a continuous function, leading to $f\circ g^{-1}\in C(\Omega')$. Therefore, for every $\epsilon> 0$, there exists an element $p\in\calK(\Omega')$ such that $\supnorm{p-f\circ g^{-1}}<\epsilon$. We have
\begin{align*}
	\supnorm{p\circ g-f}
	&=\max_{x\in\Omega}\abs{p\circ g(x)-f(x)} \nn
	&\leq\max_{x\in\Omega'}\abs{p(x)-f(g^{-1}(x))} \nn
	&=\supnorm{p-f\circ g^{-1}}<\epsilon,
\end{align*}
and the proof is complete.

\section{Achieving Equality for \cref{mmd:uniform_norm}}\label[Appendix]{appx:supnorm_equality}
Suppose $f'\in\calK$ is an optimal solution for computing $\mmd[\calK_{\infty}(\Omega),\bbP,\bbQ]$. Then we have
\begin{align*}
	\mmd[\calK_{\infty}(\Omega),\bbP,\bbQ]
	&=\sup_{f\in\calK} \braces*{\frac{\E_{\bbP}[f(X)]-\E_{\bbQ}[f(X)]}{\supnorm{f}}} \nn
	&=\frac{\ip*{f'}{\mu_{\bbP}-\mu_{\bbQ}}_{\calK}}{\supnorm{f'}}.
\end{align*}

From the Cauchy-Schwarz relaxation in \cref{eq:max_norm_ub}, it is known that if there is a $x\in\Omega$ such that $\Phi(x)\propto f'$, we have
\begin{align*}
	\supnorm{f'}=\sqrt{M_k}\norm{f'}_{\calK}.
\end{align*}

Then, we obtain
\begin{align*}
	\mmd[\calK_{\infty}(\Omega),\bbP,\bbQ]
	=\ofrac{\sqrt{M_k}}\frac{\ip*{f'}{\mu_{\bbP}-\mu_{\bbQ}}_{\calK}}{\norm{f'}_{\calK}},
\end{align*}
where $f'$ must equal $\mu_{\bbP}-\mu_{\bbQ}$, otherwise, it contradicts the supremum.

\section{Proof of Equation \cref{eq:lim_chi2}}\label[Appendix]{proof:lim_chi2}
\begin{align*}
	&\abs*{\mmd[\calK_2(\Omega,\bbQ),\bbP,\bbQ]-\mmd[\calK_2^{\lambda}(\Omega,\bbQ),\bbP,\bbQ]}\nn
	&=\abs*{\sup_{f\in\calK} \braces*{\frac{\E_{\bbP}[f(X)]-\E_{\bbQ}[f(X)]}{\E_{\bbQ}[f^2(X)]^{1/2}}}-\sup_{f\in\calK} \braces*{\frac{\E_{\bbP}[f(X)]-\E_{\bbQ}[f(X)]}{\left(\E_{\bbQ}[f^2(X)]+\lambda\norm*{f}^2_{\calK}\right)^{1/2}}}}\nn
	&\leq\sup_{f\in\calK}\abs*{\frac{\E_{\bbP}[f(X)]-\E_{\bbQ}[f(X)]}{\E_{\bbQ}[f^2(X)]^{1/2}}-\frac{\E_{\bbP}[f(X)]-\E_{\bbQ}[f(X)]}{\left(\E_{\bbQ}[f^2(X)]+\lambda\norm*{f}^2_{\calK}\right)^{1/2}}}\nn
	&=\sup_{\norm*{f}_{\calK}=1}\abs*{\frac{\E_{\bbP}[f(X)]-\E_{\bbQ}[f(X)]}{\E_{\bbQ}[f^2(X)]^{1/2}}-\frac{\E_{\bbP}[f(X)]-\E_{\bbQ}[f(X)]}{\left(\E_{\bbQ}[f^2(X)]+\lambda\norm*{f}^2_{\calK}\right)^{1/2}}}\nn
	&=\sup_{\norm*{f}_{\calK}=1}\abs*{\frac{\E_{\bbP}[f(X)]-\E_{\bbQ}[f(X)]}{\E_{\bbQ}[f^2(X)]^{1/2}}-\frac{\E_{\bbP}[f(X)]-\E_{\bbQ}[f(X)]}{\left(\E_{\bbQ}[f^2(X)]+\lambda\right)^{1/2}}}.
\end{align*}
The proof is complete by taking the limit at both sides as $\lambda\to 0$.

\section{Proof of \cref{props:ker_ub_mmd}}
Let $\calK$ be the RKHS associated with the kernel $k=\argsup_{k\in\frakJ}\mmd[\wtil{\calK}(\Omega,\bbQ),\bbP,\bbQ]$. If $\calK$ is universal, the claim is obvious by combining \cref{thm:KLDbounds,cor:CK,eq:lim_chi2}.

Suppose $\calK$ is not universal. Let $\calH$ be some universal RKHS. It suffices to show $\mmd[{\calK}_2^{\lambda}(\Omega,\bbQ),\bbP,\bbQ]=\mmd[{\calH}_2^{\lambda}(\Omega,\bbQ),\bbP,\bbQ]$ if $\mmd[\wtil{\calK}(\Omega),\bbP,\bbQ]=\mmd[\wtil{\calH}(\Omega),\bbP,\bbQ]$. 

Since $\calH_2(\Omega,\bbQ)$ is dense in $C_2(\Omega,\bbQ)$, we can always find a subspace $\calH'\subseteq\calH$ such that
\begin{align*}
	\mmd[\calK_2^{\lambda}(\Omega,\bbQ),\bbP,\bbQ]=\mmd[{\calH'}_2^{\lambda}(\Omega,\bbQ),\bbP,\bbQ].
\end{align*}
Therefore, in terms of computing $\mmd[\calK_2^{\lambda}(\Omega,\bbQ),\bbP,\bbQ]$, it is without loss of generality to assume $\calK\subseteq\calH$.  Denote $\mu_{\bbP}'$ and $\mu_{\bbQ}'$ as the kernel embeddings of $\bbP$ and $\bbQ$ w.r.t. $\calH$, and let $\Phi'(X)$ be the kernel feature mapping of $X$ w.r.t. $\calH$. We note
\begin{align*}
	\mmd[{\calH}_2^{\lambda}(\Omega,\bbQ),\bbP,\bbQ]
	=\sup_{f\in\calH}\frac{\ip*{f}{\mu_{\bbP}'-\mu_{\bbQ}'}_{\calH}}{\ip*{f}{\Psi_{\lambda}f}_{\calH}^{1/2}},
\end{align*}
which attains the maximum at $f\propto\Psi_{\lambda}^{-1/2}(\mu_{\bbP}'-\mu_{\bbQ}')$.

To show $\mmd[{\calK}_2^{\lambda}(\Omega,\bbQ),\bbP,\bbQ]=\mmd[{\calH}_2^{\lambda}(\Omega,\bbQ),\bbP,\bbQ]$, it remains to be proved that $\Psi_{\lambda}^{-1/2}(\mu_{\bbP}'-\mu_{\bbQ}')\in\calK$. 

We note
\begin{align*}
	\mmd[\wtil{\calH}(\Omega),\bbP,\bbQ]
	=\sup_{g\in\calH} \braces*{\ofrac{\sqrt{M_k}}\frac{\ip*{g}{\mu_{\bbP}-\mu_{\bbQ}}_{\calH}}{\norm*{g}_{\calH}}},
\end{align*}
which attains the maximum at $g\propto\mu_{\bbP}'-\mu_{\bbQ}'$. Therefore, under the condition $\mmd[\wtil{\calH}(\Omega),\bbP,\bbQ]=\mmd[\wtil{\calK}(\Omega),\bbP,\bbQ]$, we must have $\mu_{\bbP}'-\mu_{\bbQ}'\in\calK$. This implies $\mu_{\bbP}'\in\calK$ and $\mu_{\bbQ}'\in\calK$. 
For any $h\in\calK$, we have
\begin{align*}
	\Psi_{\lambda}h
	=\E_{\bbQ}[\Phi'(X)\otimes\Phi'(X)]h+\lambda h
	=\E_{\bbQ}[\Phi'(X)h(X)]+\lambda h\in\calK.
\end{align*}
This is because $\mu_{\bbQ}'=\E_{\bbQ}[\Phi'(X)]\in\calK$. Since $\Psi_{\lambda}$ is self-adjoint, $\Psi_{\lambda}^{-1/2}h\in\calK$ for $h\in\calK$. As a result, $\Psi_{\lambda}^{-1/2}(\mu_{\bbP}'-\mu_{\bbQ}')\in\calK$. The proof is complete.

\section{Proof of \cref{thm:tv_stats} (a)}\label[Appendix]{appx:tv_convas}
Firstly, we present \cref{appx:uniform_convas} for later use, whose proof is provided at the end of this section.
\begin{Lemma}\label{appx:uniform_convas}
	Suppose $T_n(\theta)\convas T(\theta)$ for any $\theta\in\Theta$ where $\Theta$ is a compact space. If $T_n(\theta)$ and $T(\theta)$ are Lipschitz w.r.t. $\theta$, e.g., there exists $K>0$ such that $\abs*{T(\theta)-T(\theta')}\leq K\norm*{\theta-\theta'}_2$, we have
	\begin{align*}
		\sup_{\theta\in\Theta}T_n(\theta)\convas\sup_{\theta\in\Theta}T(\theta).
	\end{align*}
\end{Lemma}

Now we return to the main proof. Because by assumption $\dfrac{k_{\theta}(x,y)}{M_{k_{\theta}}}$ is Lipschitz w.r.t. $\theta$ uniformly over all $x,y\in\Omega$, it can be easily verified that $\what{\mmd}^2[\wtil{\calK}_{\theta}(\Omega),\bbP,\bbQ]$ and $\mmd^2[\wtil{\calK}_{\theta}(\Omega),\bbP,\bbQ]$ are also Lipschitz.
By \cref{appx:uniform_convas}, it suffices to prove $\what{\mmd}^2[\wtil{\calK}_{\theta}(\Omega),\bbP,\bbQ]\convas\mmd^2[\wtil{\calK}_{\theta}(\Omega),\bbP,\bbQ]$.

The kernel $k_{\theta}$ is associated with a Hilbert–Schmidt integral operator $S$, defined as
\begin{align*}
	[Sf](x)=\int_{\Omega}k_{\theta}(x,t)f(t)\ud{t},
\end{align*}
for $f\in L^2(\Omega)$. Let $(\lambda_p,e_p)_{p\geq 1}$ be the eigenvalues and eigenfunctions of operator $S$. By Mercer's theorem \citep{FerMen:J09}, we have
\begin{align*}
	k_{\theta}(x,t)=\sum_{p\geq 1}\lambda_pe_p(x)e_p(t),\ \forall x,t\in\Omega.
\end{align*}

We examine the terms in the expansion of $\what{\mmd}^2[\wtil{\calK}_{\theta}(\Omega),\bbP,\bbQ]$, as given in \cref{eq:ker_dist_est}. Let $\what{e}_p(X)_m=\odfrac{m}\sum_{i=1}^me_p(X_i)$ and $\what{e}_p(Y)_n=\odfrac{n}\sum_{i=1}^ne_p(Y_i)$. 

For the first term in \cref{eq:ker_dist_est}, we have
\begin{align*}
	\ofrac{m^2}\sum_{i=1}^m\sum_{j=1}^mk_{\theta}(X_i,X_j)
	&=\ofrac{m^2}\sum_{i=1}^m\sum_{j=1}^m\sum_{p\geq 1}\lambda_pe_p(X_i)e_p(X_j)\nn
	&=\ofrac{m^2}\sum_{p\geq 1}\lambda_p\sum_{i=1}^m\sum_{j=1}^me_p(X_i)e_p(X_j)\nn
	&=\sum_{p\geq 1}\lambda_p\left(\what{e}_p(X)_m\right)^2.
\end{align*}
Similarly, for the second term in \cref{eq:ker_dist_est},
\begin{align*}
	&\ofrac{mn}\sum_{i=1}^m\sum_{j=1}^nk_{\theta}(X_i,Y_j)\nn
	&=\ofrac{mn}\sum_{i=1}^m\sum_{j=1}^n\sum_{p\geq 1}\lambda_pe_p(X_i)e_p(Y_j)\nn
	&=\ofrac{mn}\sum_{p\geq 1}\lambda_p\sum_{i=1}^m\sum_{j=1}^ne_p(X_i)e_p(Y_j)\nn
	&=\sum_{p\geq 1}\lambda_p\left(\sqrt{\dfrac{m}{n}}\what{e}_p(X)_m+\sqrt{\dfrac{n}{m}}\what{e}_p(Y)_n\right)^2-\left(\sqrt{\dfrac{m}{n}}\what{e}_p(X)_m\right)^2-\left(\sqrt{\dfrac{n}{m}}\what{e}_p(Y)_n\right)^2.
\end{align*}

By applying the law of large numbers, $\what{e}_p(X)_m$ converges almost surely to $\E[e_p(X)]$. This implies that $\what{\mmd}^2[\wtil{\calK}_{\theta}(\Omega),\bbP,\bbQ]\convas\mmd^2[\wtil{\calK}_{\theta}(\Omega),\bbP,\bbQ]$ by noting the expansion of $\mmd^2[\wtil{\calK}_{\theta}(\Omega),\bbP,\bbQ]$ in \cref{eq:ker_dist}. The proof is complete.

\subsection{Proof of \cref{appx:uniform_convas}}
Since $\Theta$ is compact, for any $\epsilon>0$, $\Theta$ has a finite cover $\{B(\theta_l,\epsilon)\}_{l=1}^L$ for some $L<\infty$, in which $B(\theta_l,\epsilon)$ is a ball of radius $\epsilon$ centered around $\theta_l$. Then we have
\begin{align*}
	&\abs*{\sup_{\theta\in\Theta}T_n(\theta)-\sup_{\theta\in\Theta}T(\theta)}\nn
	&\leq\sup_{\theta\in\Theta}\abs*{T_n(\theta)-T(\theta)}\nn
	&=\max_{1\leq l\leq L}\braces*{\sup_{\theta\in B(\theta_l,\epsilon)}\abs*{T_n(\theta)-T(\theta)}}\nn
	&=\max_{1\leq l\leq L}\braces*{\sup_{\theta\in B(\theta_l,\epsilon)}\abs*{T_n(\theta_l)-T(\theta_l)+T_n(\theta)-T_n(\theta_l)-T(\theta)+T(\theta_l)}}\nn
	&=\max_{1\leq l\leq L}\braces*{\abs*{T_n(\theta_i)-T(\theta_i)}+\sup_{\theta\in B(\theta_l,\epsilon)}\abs*{T_n(\theta)-T_n(\theta_l)}+\sup_{\theta\in B(\theta_l,\epsilon)}\abs*{T(\theta)-T(\theta_l)}}\nn
	&\leq\max_{1\leq l\leq L}\braces*{\abs*{T_n(\theta_l)-T(\theta_l)}+\sup_{\theta\in B(\theta_l,\epsilon)}\abs*{T_n(\theta)-T_n(\theta_l)}+\sup_{\theta\in B(\theta_l,\epsilon)}\abs*{T(\theta)-T(\theta_l)}}\nn
	&\leq\max_{1\leq l\leq L}\braces*{\abs*{T_n(\theta_l)-T(\theta_l)}+2K\epsilon}.
\end{align*}
The proof is complete due to the arbitrary smallness of $\epsilon$.

\section{Proof of \cref{thm:tv_stats} (b)}\label{appdx:tv_conv_rate}

Recall that $\calK_{\theta}$ denotes the RKHS associated with kernel $k_{\theta}$, where $\theta\in\Theta$. Recall $X$ and $X'$ are i.i.d. random variables both following the distribution $\bbP$, and $Y$ and $Y'$ are i.i.d. random variables following the distribution $\bbQ$. 

Utilizing the inequality $\abs*{\sup\{f(t):t\in A\}-\sup\{g(t):t\in A\}}\leq\sup\{\abs{f(t)-g(t):t\in A}\}$ and leveraging the expansions of $\what{\mmd}^2[\wtil{\calK}_{\theta}(\Omega),\bbP,\bbQ]$ and $\mmd^2[\wtil{\calK}_{\theta}(\Omega),\bbP,\bbQ]$ in \cref{eq:ker_dist,eq:ker_dist_est}, we obtain
\begin{align*}
	&\abs*{\sup_{\theta\in\Theta}\what{\mmd}^2[\wtil{\calK}_{\theta}(\Omega),\bbP,\bbQ]-\sup_{\theta\in\Theta}\mmd^2[\wtil{\calK}_{\theta}(\Omega),\bbP,\bbQ]} \nn
	&\leq\sup_{\theta\in\Theta}\abs*{T_m(X;\theta)}+\sup_{\theta\in\Theta}\abs*{T_n(Y;\theta)}-\sup_{\theta\in\Theta}\abs*{T_{m,n}(X,Y;\theta)},
\end{align*}
where
\begin{align*}
	&T_m(X;\theta)=\ofrac{M_{k_{\theta}}}\left(\odfrac{m^2}\sum_{i=1}^m\sum_{j=1}^mk_{\theta}(X_i,X_j)-\E_{X,X'}[k_{\theta}(X,X')]\right),\nn
	&T_n(X;\theta)=\ofrac{M_{k_{\theta}}}\left(\odfrac{n^2}\sum_{i=1}^n\sum_{j=1}^nk_{\theta}(Y_i,Y_j)-\E_{X,X'}[k_{\theta}(Y,Y')]\right),\nn
	&T_{m,n}(X,Y;\theta)=\ofrac{M_{k_{\theta}}}\left(\dfrac{2}{mn}\sum_{i=1}^m\sum_{j=1}^nk_{\theta}(X_i,Y_j)-2\E_{X,Y}[k_{\theta}(X,Y)]\right).
\end{align*}

In what follows, we establish the convergence rates for $\sup_{\theta\in\Theta}\abs*{T_m(X;\theta)}$, $\sup_{\theta\in\Theta}\abs*{T_n(Y;\theta)}$ and $\sup_{\theta\in\Theta}\abs*{T_{m,n}(X,Y;\theta)}$ separately. Then, the final result can be obtained using the following inequality:
\begin{align*}
	\P(A+B\geq t+s)\leq\P(A\geq t)+\P(B\geq s),
\end{align*}
which holds due to $\{A+B\geq t+s\}\subseteq\{\{A<t\}\cap\{B<s\}\}^\complement=\{A\geq t\}\cup\{A\geq t\}$.

\paragraph{Step 1}
In this step, we derive the convergence rate for $T_m(X;\theta)$, $T_n(Y;\theta)$ and $T_{m,n}(X,Y;\theta)$ by employing the McDiarmid's inequality \citep{Mic:J89}.

\begin{Theorem}[McDiarmid's inequality]
	Suppose $f:\bbR^d\mapsto\bbR$ satisfies the bounded difference condition: there exists constants $c_1,\ldots,c_n\in\bbR$ such that for all real numbers $x_1,\ldots,x_n$ and $x_i'$,
	\begin{align*}
		\abs{f(x_1,\ldots,x_n)-f(x_1,\ldots,x_{i-1},x_i',x_{i+1},\ldots,x_n)}\leq c_i.
	\end{align*}
	Then for any i.i.d. random variables $X_1,\ldots,X_n$,
	\begin{align*}
		\P\left(\abs{f(X_1,\ldots,X_n)-\E[f(X_1,\ldots,X_n)]}\geq t\right)\leq\exp\left(-\dfrac{2t^2}{\sum_{i=1}^nc_i^2}\right).
	\end{align*}
\end{Theorem}

Denoting $\wtil{T}_m(X;\theta)$ as
\begin{align*}
	\wtil{T}_m(X;\theta)=\ofrac{M_{k_{\theta}}}\left(\odfrac{m(m-1)}\sum_{i=1}^m\sum_{j\neq i}^mk_{\theta}(X_i,X_j)-\E_{X,X'}[k_{\theta}(X,X')]\right),
\end{align*}
we proceed to apply the McDiarmid's inequality to $\wtil{T}_m(X;\theta)$.

Due to $\odfrac{M_{k_{\theta}}}k_{\theta}(X_i,X_j)\leq 1$, $c_i\leq\dfrac{2m-2}{m(m-1)}=\dfrac{2}{m}$ for all $1\leq i\leq m$. Therefore, we have
\begin{align*}
	\P(\abs*{\wtil{T}_m(X;\theta)}\geq t)\leq\exp\left(-\dfrac{m}{2}t^2\right).
\end{align*}

Moreover, we have
\begin{align*}
	\abs*{T_m(X;\theta)}
	&=\abs*{\dfrac{m-1}{m}\wtil{T}_m(X;\theta)-\odfrac{m}\E_{X,X'}[\ofrac{M_{k_{\theta}}}k(X,X')]+\odfrac{m^2}\sum_{i=1}^m\ofrac{M_{k_{\theta}}}k(X_i,X_i)}\nn
	&\leq\dfrac{m-1}{m}\abs*{\wtil{T}_m(X;\theta)}+\dfrac{2}{m}.
\end{align*}
Therefore, we have
\begin{align*}
	\P(\abs*{T_m(X;\theta)}\geq\dfrac{m-1}{m}t+\dfrac{2}{m})\leq\P(\abs*{\wtil{T}_m(X;\theta)}\geq t)\leq\exp\left(-\dfrac{m}{2}t^2\right).
\end{align*}

Using the same reasoning, we have
\begin{align*}
	&\P(\abs*{T_n(Y;\theta)}\geq\dfrac{n-1}{n}t+\dfrac{2}{n})\leq\exp\left(-\dfrac{n}{2}t^2\right).
\end{align*}

For $T_{m,n}(X,Y;\theta)$, we can directly apply the McDiarmid's inequality, where $c_i=\odfrac{m}$ for $1\leq i\leq m$ and $c_i=\odfrac{n}$ for $(n+1)\leq i\leq (m+n)$. Hence $\sum_{i=1}^{m+n}c_i^2\leq\odfrac{m}+\odfrac{n}$. We have
\begin{align*}
	\P(\abs*{T_{m,n}(X,Y;\theta)}\geq t)\leq\exp\left(-\dfrac{mn}{m+n}t^2\right).
\end{align*}

\paragraph{Step 2}
At this stage, we show that $\sup_{\theta\in\Theta}\abs*{T_m(X;\theta)}$ attains the same convergence rate as $T_m(X;\theta)$ when $\Theta$ is compact. 

Firstly, we prove that for any $\delta>0$, there exists $L<\infty$ such that
\begin{align*}
	\sup_{\theta\in\Theta}\abs*{T_m(X;\theta)}\leq\max_{1\leq l\leq L}\abs*{T_m(X;\theta_l)}+\delta,
\end{align*}
where
\begin{align*}
	T_m(X;\theta_l)
	=\odfrac{m^2}\sum_{i=1}^m\sum_{j=1}^m\sup_{\theta\in B_{\epsilon}(\theta_l)}\dfrac{k_{\theta}(X_i,X_j)}{M_{k_{\theta}}}-\E_{X,X'}[\sup_{\theta\in B_{\epsilon}(\theta_l)}\dfrac{k_{\theta}(X,X')}{M_{k_{\theta}}}],
\end{align*}
with $B_{\epsilon}(\theta_l)=\braces*{\theta\in\Theta:\norm*{\theta-\theta_l}_2\leq\epsilon}$.

\begin{proof}
Given the compactness of $\Theta$, for any $\epsilon>0$, we can identify a finite cover $\bigcup_{l=1}^L B_{\epsilon}(\theta_l)$ such that $\Theta\subseteq\bigcup_{l=1}^L B_{\epsilon}(\theta_l)$, where $\theta_l\in\Theta$. Adapting the proof technique from the Glivenko-Cantelli theorem \citep{Dud:B02}, we obtain
\begin{align*}
	&\sup_{\theta\in\Theta}\abs*{T_m(X;\theta)}\nn
	&=\max_{1\leq l\leq L}\sup_{\theta\in B_{\epsilon}(\theta_l)}\abs*{T_m(X;\theta)}\nn
	&\leq\max_{1\leq l\leq L}\abs*{\odfrac{m^2}\sum_{i=1}^n\sum_{j=1}^n\sup_{\theta\in B_{\epsilon}(\theta_l)}\dfrac{k_{\theta}(X_i,X_j)}{M_{k_{\theta}}}-\E_{X,X'}[\inf_{\theta\in B_{\epsilon}(\theta_l)}\dfrac{k_{\theta}(X,X')}{M_{k_{\theta}}}]}\nn
	&\leq\max_{1\leq l\leq L}\abs*{T_m(X;\theta_l)}+\max_{1\leq l\leq L}\abs*{\E_{X,X'}[\triangle_l(X,X')]},
\end{align*}
where
\begin{align*}
	\triangle_l(X,X')
	&=\sup_{\theta\in B_{\epsilon}(\theta_l)}\dfrac{k_{\theta}(X,X')}{M_{k_{\theta}}}-\inf_{\theta\in B_{\epsilon}(\theta_l)}\dfrac{k_{\theta}(X,X')}{M_{k_{\theta}}}\nn
	&\leq 2\sup_{\theta\in B_{\epsilon}(\theta_l)}\abs*{\dfrac{k_{\theta}(X,X')}{M_{k_{\theta}}}-\dfrac{k_{\theta_l}(X,X')}{M_{k_{\theta_l}}}}\nn
	&\leq 2\sup_{\theta\in B_{\epsilon}(\theta_l)}\braces*{C\norm*{\theta-\theta_l}_2}\leq 2C\epsilon.
\end{align*}
The proof is complete by noting $\epsilon$ can be chosen to be sufficiently small.
\end{proof}

Furthermore, we have
\begin{align*}
	\P(\sup_{\theta\in\Theta}\abs*{T_m(X;\theta)}\geq t)
	&\leq\P(\max_{1\leq l\leq L}\abs*{T_m(X;\theta_l)}\geq t)\nn
	&\leq\P(\abs*{T_m(X;\theta_l)}\geq t),
\end{align*}
for any $1\leq l\leq L$. The proof is complete by noting that $T_m(X;\theta_l)$ achieves the same convergence rate as $T_m(X;\theta)$.

\section{Proof of \cref{thm:theta_stats} (a)}\label{appx:theta_convd}
The proof relies on the asymptotic distribution of U-statistics \citep{Hoe:J89}. For a random variable $Z$, consider a sequence of i.i.d. samples $(Z_i)_{i=1}^n$. Let $Z'$ be a random variable following the same distribution as $Z$. Given a symmetric function $h(Z,Z')$, an unbiased estimator of $\E_{Z,Z'}[h(Z,Z')]$ is
\begin{align*}
	\what{U}_n(h)=\odfrac{n(n-1)}\sum_{i=1}^n\sum_{j\neq i}^nh(Z_i,Z_j).
\end{align*}

\begin{Theorem}\label{thm:asymp_u_stats}
	Let $\sigma^2=\var\left(\E_{Z}[h(Z,Z')]\right)$. If $\sigma^2<\infty$, then
	\begin{align*}
		\sqrt{n}\left(\what{U}_n(X)-\E_{Z,Z'}[h(Z,Z')]\right)\convd\N{0}{4\sigma^2}.
	\end{align*}
\end{Theorem}

We return to the main proof. Let $Z_i=(X_i,Y_i)$ and hence $Z_1,\ldots,Z_n$ are independently and identically distributed. Denote
\begin{align*}
	h_{\theta}(Z_i,Z_j)=\ofrac{M_{k_{\theta}}}\left(k_{\theta}(X_i,X_j)-2k_{\theta}(X_i,Y_j)+k_{\theta}(Y_i,Y_j)\right).
\end{align*}

When $m=n$, $\what{L}(\theta)$ can be written as
\begin{align}
	\what{L}(\theta)
	&=\what{\mmd}^2[\wtil{\calK}_{\theta}(\Omega),\bbP,\bbQ]\nn
	&=\odfrac{n^2}\sum_{i=1}^n\sum_{j=1}^nh_{\theta}(Z_i,Z_j)\nn \label{eq:L_expand}
	&=\dfrac{n-1}{n}\left(\odfrac{n(n-1)}\sum_{i=1}^n\sum_{j\neq i}^nh_{\theta}(Z_i,Z_j)\right)+\odfrac{n^2}\sum_{i=1}^nh_{\theta}(Z_i,Z_i).
\end{align}

Recall that $\nabla\what{L}(\what{\theta})$ and $\nabla^2\what{L}(\what{\theta})$ denote the gradient and Hessian of $\what{L}(\theta)$ w.r.t. $\theta$, evaluated at $\what{\theta}$, receptively. Note that $\nabla\what{L}(\what{\theta})=0$ by assumption. The second-order Taylor expansion of $\nabla\what{L}(\what{\theta})$ with the Lagrange remainder is given by
\begin{align}\label{eq:taylor_L_hat}
	\nabla\what{L}(\what{\theta})=\nabla\what{L}(\theta^*)+\nabla^2\what{L}(\theta^*)(\what{\theta}-\theta^*)+(\what{\theta}-\theta^*)\T M(\what{\theta}-\theta^*).
\end{align}
where $M$ is a matrix where the $(i,j)$-th entry is
\begin{align*}
	M_{i,j}=\left(\nabla\left(\nabla^2\what{L}(\theta)\right)_{i,j}(\theta')\right)(\what{\theta}-\theta^*),
\end{align*}
for some fixed $\theta'$. Rearranging the equation \cref{eq:taylor_L_hat}, we obtain
\begin{align*}
	\sqrt{n}(\what{\theta}-\theta^*)\left(\bI-\calO(\norm*{\what{\theta}-\theta^*})\right)
	=-\inv{\nabla^2\what{L}(\theta^*)}\left(\sqrt{n}\nabla\what{L}(\theta^*)\right).
\end{align*}

From \cref{eq:L_expand}, $\nabla\what{L}(\theta^*)$ can be written as
\begin{align}\label{eq:grad_L_expand}
	\nabla\what{L}(\theta^*)=\dfrac{n-1}{n}\underbrace{\left(\odfrac{n(n-1)}\sum_{i=1}^n\sum_{j\neq i}^n\nabla h_{\theta^*}(Z_i,Z_j)\right)}_{U(\nabla h_{\theta^*})}+\odfrac{n^2}\sum_{i=1}^n\nabla h_{\theta^*}(Z_i,Z_i).
\end{align}

According to \cref{thm:asymp_u_stats}, we obtain the following asymptotic result:
\begin{align*}
	\sqrt{n}\left(U(\nabla h_{\theta^*})-\E_{Z,Z'}[h_{\theta^*}(Z,Z')]\right)
	\convd\N{0}{4\bSigma},
\end{align*}
where $\bSigma=\cov\left(\E_{Z'}[\nabla h_{\theta^*}(Z,Z')]\right)$.
 
Taking into account that $\E_{Z,Z'}[h_{\theta^*}(Z,Z')]=L(\theta^*)=0$ and the last term in \cref{eq:grad_L_expand} converges to $0$, we deduce that  $\sqrt{n}\nabla\what{L}(\theta^*)\convd\N{0}{4\bSigma}$.

Applying the techniques outlined in \cref{appx:tv_convas}, it can be demonstrated that $\nabla^2\what{L}(\theta^*)\convas\nabla^2L(\theta^*)$. Consequently, we arrive at the final result:
\begin{align*}
	\sqrt{n}(\what{\theta}-\theta^*)
	\convd\N{0}{4\left(\nabla^2L(\theta^*)\right)^{-1}\bSigma\left(\nabla^2L(\theta^*)\right)^{-1}}.
\end{align*}

\section{Proof of \cref{thm:theta_stats} (b)}\label{appx:theta_conv_rate}
Recall the notation $\ell(\theta;x,y)=\dfrac{k_{\theta}(x,y)}{M_{k_{\theta}}}$. Under the condition $\norm*{\nabla^2\ell(\theta;x,y)}\geq\gamma$ for all $\theta\in\Theta$ and $x,y\in\Omega$, it is straightway to verify that $\norm*{\nabla^2\what{L}(\theta)}\geq\gamma$.

The first-order Taylor expansion of $\nabla\what{L}(\what{\theta})$ with the Lagrange remainder is given by
\begin{align*}
	\nabla\what{L}(\what{\theta})=\nabla\what{L}(\theta^*)+\nabla^2\what{L}(\theta')(\what{\theta}-\theta^*),
\end{align*}
for some $\theta'\in\Theta$. Consequently we have
\begin{align*}
	(\what{\theta}-\theta^*)
	=\left(\nabla^2\what{L}(\theta')\right)^{-1}\left(\nabla\what{L}(\what{\theta})-\nabla\what{L}(\theta^*)\right).
\end{align*}

Subsequently, we obtain
\begin{align*}
	\norm*{\what{\theta}-\theta^*}\leq\norm*{\nabla^2\what{L}(\theta')}^{-1}\norm*{\nabla\what{L}(\theta^*)}
	\leq\gamma^{-1}\norm*{\nabla\what{L}(\theta^*)}.
\end{align*}

From \cref{eq:ker_dist_est}, we obtain
\begin{align*}
	\ppfrac{\what{L}(\theta^*)}{\theta_i}
	&=\frac{1}{m^2}\sum_{i=1}^m\sum_{j=1}^m\ppfrac{\ell(\theta;X_i,X_j)}{\theta_i}
	+\frac{1}{n^2}\sum_{i=1}^n\sum_{i=1}^n\ppfrac{\ell(\theta;Y_i,Y_j)}{\theta_i}\nn
	&-\frac{2}{mn}\sum_{i=1}^m\sum_{i=1}^n\ppfrac{\ell(\theta;X_i,Y_j)}{\theta_i}.
\end{align*}

Noting $\abs*{\ppfrac{\ell(\theta;x,y)}{\theta_i}}\leq\tau_i$ and applying the techniques in \cref{appdx:tv_conv_rate}, we can easily obtain
\begin{align*}
	&\P(\abs*{\ppfrac{\what{L}(\theta^*)}{\theta_i}}\geq 3t+\dfrac{2\tau_i}{m}+\dfrac{2\tau_i}{n})\nn
	&\leq\exp\left(-\dfrac{m}{2}\dfrac{t^2}{\tau_i^2}\right)+\exp\left(-\dfrac{n}{2}\dfrac{t^2}{\tau_i^2}\right)+\exp\left(-\dfrac{mn}{m+n}\dfrac{t^2}{\tau_i^2}\right),
\end{align*}
for any $t>0$.

Finally, noting $\norm*{\nabla_{\theta}\what{L}(\theta^*)}=\sqrt{\sum_{i=1}^s\abs*{\ppfrac{\what{L}(\theta^*)}{\theta_i}}^2}$, we thus have
\begin{align*}
	&\P(\norm*{\what{\theta}-\theta^*}_2\geq\gamma^{-1}\sqrt{\sum_{i=1}^s\left(3t+\dfrac{2\tau_i}{m}+\dfrac{2\tau_i}{n}\right)^2})\nn
	&\leq\sum_{i=1}^s\left(\exp\left(-\dfrac{mt^2}{2\tau_i^2}\right)+\exp\left(-\dfrac{nt^2}{2\tau_i^2}\right)+\exp\left(-\dfrac{mnt^2}{(m+n)\tau_i^2}\right)\right).
\end{align*}
The proof is complete.

\section{Proof of \cref{thm:chi2_conv} (a)}\label[Appendix]{appx:chi2_convas}
To commence, we demonstrate that the eigenvalues of $\Psi_{\lambda}$ and $\what{\Psi}_{\lambda}$ lie in the range of $\left[\lambda,\lambda+M_k\right]$. To see this, note that both $\Psi$ and $\what{\Psi}$ are Hilbert-Schmidt operators. For instance, we have
\begin{align*}
	\Tr(\Psi)=\sum_{i\geq 1}\ip*{e_i}{\Psi e_i}_{\calK}
	=\E_{\bbQ}[\ip*{\Phi(X)}{e_i}_{\calK}^2]
	=\E_{\bbQ}[\norm*{\Phi(X)}_{\calK}^2]=\E_{\bbQ}[k(X,X)]<\infty.
\end{align*}
Therefore, the eigenvalues, in non-increasing order, tend to be $0$. On the other hand, for any $f\in\calK$, we have
\begin{align*}
	\dfrac{\ip*{f}{\Psi f}_{\calK}}{\norm*{f}_{\calK}}
	=\dfrac{\E_{Y}[f(Y)^2]}{\norm*{f}_{\calK}^2}
	&=\dfrac{\E_{Y}[\ip*{f}{\Phi(Y)}_{\calK}^2]}{\norm*{f}_{\calK}^2}\nn
	&\leq\dfrac{\norm*{f}_{\calK}^2\E_{Y}[k(Y,Y)]}{\norm*{f}_{\calK}^2}
	\leq M_k.
\end{align*}
Similarly, the same upper-bound reasoning applies to $\what{\Psi}$. As a result, we obtain
\begin{align*}
	&\opnorm{\Psi_{\lambda}}\leq\lambda+M_k,\opnorm{\Psi_{\lambda}^{-1}}\leq\lambda^{-1},\ \text{and} \nn
	&\opnorm{\what{\Psi}_{\lambda}}\leq\lambda+M_k,\opnorm{\what{\Psi}_{\lambda}^{-1}}\leq\lambda^{-1}.
\end{align*}

Denote $\mu=\mu_{\bbP}-\mu_{\bbQ}$ and $\what{\mu}=\what{\mu}_{\bbP}-\what{\mu}_{\bbQ}$. Moreover, we have $\norm*{\mu}_{\calK}\leq 2\sqrt{M_k}$ and $\norm*{\what{\mu}}_{\calK}\leq 2\sqrt{M_k}$, both bounded in $\calK$ under \cref{asm:int_ker}. This is shown as follows:
\begin{align*}
	\norm*{\mu}_{\calK}
	&=\norm*{\E_{X}[\Phi(X)]-\E_{Y}[\Phi(Y)]}_{\calK}\nn
	&\leq\norm*{\E_{X}[\Phi(X)]}_{\calK}+\norm*{\E_{Y}[\Phi(Y)]}_{\calK}\nn
	&=\sqrt{\E_{X,X'}[k(X,X')]}+\sqrt{\E_{Y,Y'}[k(Y,Y')]}\nn
	&\leq 2\sqrt{M_k},
\end{align*}
and
\begin{align*}
	\norm*{\what{\mu}}_{\calK}
	&=\norm*{\odfrac{m}\sum_{i=1}^m\Phi(X_i)-\odfrac{n}\sum_{i=1}^n\Phi(Y_i)}_{\calK}\nn
	&\leq\norm*{\odfrac{m}\sum_{i=1}^m\Phi(X_i)}_{\calK}+\norm*{\odfrac{n}\sum_{i=1}^n\Phi(Y_i)}_{\calK}\nn
	&=\odfrac{m}\sqrt{\sum_{i=1}^m\sum_{j=1}^mk(X_i,X_j)}+\odfrac{n}\sqrt{\sum_{i=1}^n\sum_{j=1}^nk(Y_i,Y_j)}\nn
	&\leq 2\sqrt{M_k}.
\end{align*}

Importantly, note that for $\lambda>0$, both $\what{\Psi}_{\lambda}$ and $\Psi_{\lambda}$, along with their inverses, are bounded self-adjoint operators.

Now we return to the main proof. 

Applying the triangle inequality, we obtain
\begin{align}
	\abs*{\ip*{\mu}{\Psi_{\lambda}^{-1}\mu}_{\calK}-\ip*{\what{\mu}}{\what{\Psi}_{\lambda}^{-1}\what{\mu}}_{\calK}} \label{chi2_as_sub1}
	&\leq\abs*{\ip*{\mu}{\Psi_{\lambda}^{-1}\mu}_{\calK}-\ip*{\mu}{\what{\Psi}_{\lambda}^{-1}\mu}_{\calK}}\\ \label{chi2_as_sub2}
	&+\abs*{\ip*{\mu}{\what{\Psi}_{\lambda}^{-1}\mu}_{\calK}-\ip*{\mu}{\what{\Psi}_{\lambda}^{-1}\what{\mu}}_{\calK}}\\ 
	&+\abs*{\ip*{\mu}{\what{\Psi}_{\lambda}^{-1}\what{\mu}}_{\calK}-\ip*{\what{\mu}}{\what{\Psi}_{\lambda}^{-1}\what{\mu}}_{\calK}}. \label{chi2_as_sub3}
\end{align}

It suffices to prove the almost surely (a.s.) convergence for each of \cref{chi2_as_sub1,chi2_as_sub2,chi2_as_sub3}.

For \cref{chi2_as_sub2}, the a.s. convergence can be established using the following inequality:
\begin{align*}
	\abs*{\ip*{\mu}{\what{\Psi}_{\lambda}^{-1}\mu}_{\calK}-\ip*{\mu}{\what{\Psi}_{\lambda}^{-1}\what{\mu}}_{\calK}}
	&=\abs*{\ip*{\what{\Psi}_{\lambda}^{-1}\mu}{\mu}_{\calK}-\ip*{\what{\Psi}_{\lambda}^{-1}\mu}{\what{\mu}}_{\calK}}\nn
	&=\abs*{\ip*{\what{\Psi}_{\lambda}^{-1}\mu}{\mu-\what{\mu}}_{\calK}}\nn
	&\leq\norm*{\what{\Psi}_{\lambda}^{-1}\mu}_{\calK}\norm*{\mu-\what{\mu}}_{\calK},
\end{align*}
where $\norm*{\what{\Psi}_{\lambda}^{-1}\mu}_{\calK}\leq\opnorm{\what{\Psi}_{\lambda}^{-1}}\norm*{\mu}_{\calK}$ is bounded. Then, the convergence of \cref{chi2_as_sub2} follows from $\norm*{\mu-\what{\mu}}_{\calK}^2\convas 0$, which is shown in \cref{appdx:convas_mu_psi}.  The proof of a.s. convergence for \cref{chi2_as_sub3} follows a similar technique.

Now it remains to be proved that that $\cref{chi2_as_sub1}\convas 0$.

Using the result that if two operators $A$ and $B$ are invertible, then $A^{-1}-B^{-1}=B^{-1}(B-A)A^{-1}$, we obtain
\begin{align*}
	\abs*{\ip*{\mu}{\what{\Psi}_{\lambda}^{-1}\mu}_{\calK}-\ip*{\mu}{\Psi_{\lambda}^{-1}\mu}_{\calK}}
	&=\abs*{\ip*{\mu}{(\what{\Psi}_{\lambda}^{-1}-\Psi_{\lambda}^{-1})\mu}_{\calK}}\nn
	&=\abs*{\ip*{\mu}{\Psi_{\lambda}^{-1}\left(\Psi_{\lambda}-\what{\Psi}_{\lambda}\right)\what{\Psi}_{\lambda}^{-1}\mu}_{\calK}}\nn
	&=\abs*{\ip*{\left(\Psi_{\lambda}-\what{\Psi}_{\lambda}\right)\Psi_{\lambda}^{-1}\mu}{\what{\Psi}_{\lambda}^{-1}\mu}_{\calK}}\nn
	&\leq\norm*{\left(\Psi_{\lambda}-\what{\Psi}_{\lambda}\right)\Psi_{\lambda}^{-1}\mu}_{\calK}\norm*{\what{\Psi}_{\lambda}^{-1}\mu}_{\calK}\nn
	&\leq\norm*{\left(\Psi_{\lambda}-\what{\Psi}_{\lambda}\right)h}_{\calK}\opnorm{\what{\Psi}_{\lambda}^{-1}}\norm*{\mu}_{\calK}\nn
	&\leq\lambda^{-1}M_k\norm*{\left(\Psi_{\lambda}-\what{\Psi}_{\lambda}\right)h}_{\calK},
\end{align*}
where $h=\Psi_{\lambda}^{-1}\mu$ is a deterministic element in $\calK$. The proof of $\norm*{\left(\Psi_{\lambda}-\what{\Psi}_{\lambda}\right)h}_{\calK}\convas 0$ is provided in \cref{appdx:convas_mu_psi}.
The proof is complete.

\subsection{Almost surely convergence of $\norm*{\mu-\what{\mu}}_{\calK}$ and $\norm*{(\Psi_{\lambda}-\what{\Psi}_{\lambda})h}_{\calK}$}
\label{appdx:convas_mu_psi}

We have the following bound on the norm of the difference between $\mu$ and $\what{\mu}$ in $\calK$:
\begin{align*}
	\norm*{\mu-\what{\mu}}_{\calK}^2
	\leq\norm*{\what{\mu}_{\bbP}-\mu_{\bbP}}_{\calK}^2+\norm*{\what{\mu}_{\bbQ}-\mu_{\bbQ}}_{\calK}^2,
\end{align*}
where the terms can be expanded as
\begin{align}
	&\norm*{\what{\mu}_{\bbP}-\mu_{\bbP}}_{\calK}^2\nn \label{mu_u_stats}
	&=\dfrac{1}{m^2}\sum_{i=1}^m\sum_{j=1}^mk(X_i,X_j)-\E_{X,X'}k(X,X')\\ 
	&-\dfrac{2}{m}\sum_{i=1}^m\E_{X}[k(X,X_i)]+2\E_{X,X'}k(X,X').\label{mu_mean_stats}
\end{align}

We also have
\begin{align}
	&\norm*{(\what{\Psi}_{\lambda}-\Psi_{\lambda})h}_{\calK}^2\nn \label{psi_u_stats}
	&=\odfrac{n^2}\sum_{i=1}^n\sum_{j=1}^nk(Y_i,Y_j)h(Y_i)h(Y_j)-\E_{Y,Y'}[k(Y,Y')h(Y)h(Y')] \\
	&-\dfrac{2}{n}\sum_{i=1}^n\E_{Y}[k(Y_i,Y)h(Y_i)h(Y)]+2\E_{Y,Y'}[k(Y,Y')h(Y)h(Y')]. \label{psi_mean_stats}
\end{align}

The almost surely convergence of \cref{mu_u_stats,psi_u_stats} can be referred to \cref{appx:tv_convas}, where the eigendecomposition of the kernel is applied. Additionally, the almost surely convergence of \cref{mu_mean_stats,psi_mean_stats} follow from the law of large numbers.

\section{Proof of \cref{thm:chi2_conv} (b)}\label{appdx:chi2_convd}
Let $\mu=\mu_{\bbP}-\mu_{\bbQ}$ and $\what{\mu}=\what{\mu}_{\bbP}-\what{\mu}_{\bbQ}$. Recall $\opnorm{\Psi_{\lambda}^{-1}}\leq\lambda^{-1}$ and $\opnorm{\what{\Psi}_{\lambda}^{-1}}\leq\lambda^{-1}$ from \cref{appx:chi2_convas}.

In step 1, we show the convergence of distribution of the random variable $\ip*{\what{\mu}}{\what{\Psi}_{\lambda}^{-1}\what{\mu}}_{\calK}$ to $\ip*{\what{\mu}}{\Psi_{\lambda}^{-1}\what{\mu}}_{\calK}$. In step 2, we derive the asymptotic distribution for $\ip*{\what{\mu}}{\Psi_{\lambda}^{-1}\what{\mu}}_{\calK}$.

\paragraph{Step 1} 
From the triangle inequality, we decompose the differences as follows:
\begin{align}\label{chi2_convd_sub1}
	\abs*{\ip*{\what{\mu}}{\what{\Psi}_{\lambda}^{-1}\what{\mu}}_{\calK}-\ip*{\what{\mu}}{\Psi_{\lambda}^{-1}\what{\mu}}_{\calK}}
	&\leq\abs*{\ip*{\what{\mu}}{\what{\Psi}_{\lambda}^{-1}\what{\mu}}_{\calK}-\ip*{\mu}{\what{\Psi}_{\lambda}^{-1}\what{\mu}}_{\calK}}\\ \label{chi2_convd_sub2}
	&+\abs*{\ip*{\mu}{\Psi_{\lambda}^{-1}\what{\mu}}_{\calK}-\ip*{\what{\mu}}{\Psi_{\lambda}^{-1}\what{\mu}}_{\calK}}\\ \label{chi2_convd_sub3}
	&+\abs*{\ip*{\mu}{\what{\Psi}_{\lambda}^{-1}\what{\mu}}_{\calK}-\ip*{\mu}{\Psi_{\lambda}^{-1}\what{\mu}}_{\calK}}.
\end{align}

For \cref{chi2_convd_sub1,chi2_convd_sub2}, we have
\begin{align*}
	\cref{chi2_convd_sub1}=\abs*{\ip*{\what{\mu}-\mu}{\what{\Psi}_{\lambda}^{-1}\what{\mu}}_{\calK}}
	&\leq\norm*{\what{\mu}-\mu}_{\calK}\norm*{\what{\Psi}_{\lambda}^{-1}\what{\mu}}_{\calK}, \nn
	&\leq\norm*{\what{\mu}-\mu}_{\calK}\opnorm{\what{\Psi}_{\lambda}^{-1}}\norm*{\what{\mu}}_{\calK} \nn
	&\leq\lambda^{-1}M_k\norm*{\what{\mu}-\mu}_{\calK}
\end{align*}
and
\begin{align*}
	\cref{chi2_convd_sub2}=\abs*{\ip*{\mu-\what{\mu}}{\Psi_{\lambda}^{-1}\what{\mu}}_{\calK}}
	&\leq\norm*{\what{\mu}-\mu}_{\calK}\norm*{\Psi_{\lambda}^{-1}\what{\mu}}_{\calK} \nn
	&\leq\norm*{\what{\mu}-\mu}_{\calK}\opnorm{\Psi_{\lambda}^{-1}}\norm*{\what{\mu}}_{\calK} \nn
	&\leq\lambda^{-1}M_k\norm*{\what{\mu}-\mu}_{\calK},
\end{align*}
where $\norm*{\what{\mu}-\mu}_{\calK}$ converges to $0$ as shown in \cref{appdx:convas_mu_psi}.

For \cref{chi2_convd_sub3}, we have
\begin{align*}
	\abs*{\ip*{\mu}{\what{\Psi}_{\lambda}^{-1}\what{\mu}}_{\calK}-\ip*{\mu}{\Psi_{\lambda}^{-1}\what{\mu}}_{\calK}}
	&=\abs*{\ip*{\mu}{\left(\what{\Psi}_{\lambda}^{-1}-\Psi_{\lambda}^{-1}\right)\what{\mu}}_{\calK}}\nn
	&=\abs*{\ip*{\mu}{\Psi_{\lambda}^{-1}\left(\Psi_{\lambda}-\what{\Psi}_{\lambda}\right)\what{\Psi}_{\lambda}^{-1}\what{\mu}}_{\calK}}\nn
	&=\abs*{\ip*{\left(\Psi_{\lambda}-\what{\Psi}_{\lambda}\right)\Psi_{\lambda}^{-1}\mu}{\what{\Psi}_{\lambda}^{-1}\what{\mu}}_{\calK}}\nn
	&\leq\norm*{\left(\Psi_{\lambda}-\what{\Psi}_{\lambda}\right)\Psi_{\lambda}^{-1}\mu}_{\calK}\norm*{\what{\Psi}_{\lambda}^{-1}\what{\mu}}_{\calK}\nn
	&\leq\norm*{\left(\Psi_{\lambda}-\what{\Psi}_{\lambda}\right)\Psi_{\lambda}^{-1}\mu}_{\calK}\opnorm{\what{\Psi}_{\lambda}^{-1}}\norm*{\what{\mu}}_{\calK}\nn
	&\leq\lambda^{-1}M_k\norm*{\left(\Psi_{\lambda}-\what{\Psi}_{\lambda}\right)\Psi_{\lambda}^{-1}\mu}_{\calK},
\end{align*}
where $\norm*{\left(\what{\Psi}_{\lambda}-\Psi_{\lambda}\right)\Psi_{\lambda}^{-1}\mu}_{\calK}\convas 0$ is shown in \cref{appdx:convas_mu_psi}. The proof of step 1 is complete.

\paragraph{Step 2}
Now we establish the asymptotic distribution for $\ip*{\what{\mu}}{\Psi_{\lambda}^{-1}\what{\mu}}_{\calK}$. We have
\begin{align*}
	\ip*{\what{\mu}}{\Psi_{\lambda}^{-1}\what{\mu}}_{\calK}
	&=\ip*{\sum_{i=1}^n\left(\Phi(X_i)-\Phi(Y_i)\right)}{\Psi_{\lambda}^{-1}\sum_{j=1}^n\left(\Phi(X_j)-\Phi(Y_j)\right)}_{\calK}\nn
	&=\sum_{i=1}^n\sum_{j=1}^n\ip*{\Phi(X_i)-\Phi(Y_i)}{\Psi_{\lambda}^{-1}\left(\Phi(X_j)-\Phi(Y_j)\right)}_{\calK}
\end{align*}

Applying the techniques presented in \cref{appx:theta_convd} (which is based on the asymptotic distribution of U-statistics), we obtain
\begin{align*}
	\sqrt{n}\left(\ip*{\what{\mu}}{\Psi_{\lambda}^{-1}\what{\mu}}_{\calK}-\ip*{\mu}{\Psi_{\lambda}^{-1}\mu}_{\calK}\right)\convd\N{0}{\sigma^2},
\end{align*}
where
\begin{align*}
	\sigma^2
	&=\var\left(\E_{X',Y'}[\ip*{\Phi(X)-\Phi(Y)}{\Psi_{\lambda}^{-1}\left(\Phi(X')-\Phi(Y')\right)}_{\calK}]\right)\nn
	&=\var\left(\ip*{\Phi(X)-\Phi(Y)}{\Psi_{\lambda}^{-1}\left(\mu_{\bbP}-\mu_{\bbQ}\right)}_{\calK}\right).
\end{align*}

\section{Proof of \cref{thm:chi2_conv} (c)}\label{appx:chi2_conv_rate}
Using the results in \cref{appx:chi2_convas}, we obtain
\begin{align}
	\abs*{\ip*{\what{\mu}}{\what{\Psi}_{\lambda}^{-1}\what{\mu}}_{\calK}-\ip*{\mu}{\Psi_{\lambda}^{-1}\mu}_{\calK}} \label{chi2_rate_a}
	&\leq\abs*{\ip*{\what{\mu}}{\what{\Psi}_{\lambda}^{-1}\what{\mu}}_{\calK}-\ip*{\mu}{\what{\Psi}_{\lambda}^{-1}\what{\mu}}_{\calK}} \\ \label{chi2_rate_b}
	&+\abs*{\ip*{\mu}{\what{\Psi}_{\lambda}^{-1}\what{\mu}}_{\calK}-\ip*{\mu}{\Psi_{\lambda}^{-1}\what{\mu}}_{\calK}}\\ \label{chi2_rate_c}
	&+\abs*{\ip*{\mu}{\Psi_{\lambda}^{-1}\what{\mu}}_{\calK}-\ip*{\mu}{\Psi_{\lambda}^{-1}\mu}_{\calK}}.
\end{align}

In what follows, we establish the convergence rate for each of \cref{chi2_rate_a,chi2_rate_b,chi2_rate_c}. The final result can be obtained using the inequality:
\begin{align*}
	\P(A+B\geq a+b)\leq\P(A\geq a)+\P(B\geq b).
\end{align*}

Firstly, recall $\lambda\leq\opnorm{\Psi_{\lambda}}\leq M_k+\lambda$ and $\lambda\leq\opnorm{\what{\Psi}_{\lambda}}\leq M_k+\lambda$ from \cref{appx:chi2_convas}. Moreover, $\what{\Psi}_{\lambda}$ and $\Psi_{\lambda}$ are commuting operators, i.e., $\what{\Psi}_{\lambda}\Psi_{\lambda}f=\Psi_{\lambda}\what{\Psi}_{\lambda}f$ for any $f\in\calK$. 

\paragraph{Convergence rate for \cref{chi2_rate_b}}
Using the result that if two operators $A$ and $B$ are invertible, then $A^{-1}-B^{-1}=B^{-1}(B-A)A^{-1}$, we obtain
\begin{align}
	\abs{\cref{chi2_rate_b}}
	&=\abs*{\ip*{\mu}{\left(\what{\Psi}_{\lambda}^{-1}-\Psi_{\lambda}^{-1}\right)\what{\mu}}_{\calK}}\nn
	&=\abs*{\ip*{\mu}{\Psi_{\lambda}^{-1}\left(\Psi_{\lambda}-\what{\Psi}_{\lambda}\right)\what{\Psi}_{\lambda}^{-1}\what{\mu}}_{\calK}}\nn
	&=\abs*{\ip*{\Psi_{\lambda}^{-1}\mu}{\left(\Psi_{\lambda}-\what{\Psi}_{\lambda}\right)\what{\Psi}_{\lambda}^{-1}\what{\mu}}_{\calK}}. \label{chi2_rate_b_ineq}
\end{align}

Denote $g=\Psi_{\lambda}^{-1}\mu$ and $h=\what{\Psi}_{\lambda}^{-1}\what{\mu}$. Then we have
\begin{align*}
	&\ip*{g}{\what{\Psi}_{\lambda}h}_{\calK}
	=\ofrac{n}\sum_{i=1}^ng(Y_i)h(Y_j),\ \text{and} \nn
	&\ip*{g}{\Psi_{\lambda}h}_{\calK}
	=\E_{Y,Y'}[g(Y)h(Y')].
\end{align*}

Therefore, we can apply McDiarmid's inequality to \cref{chi2_rate_b_ineq}. Next we find $c_i$ for McDiarmid's inequality. 

For any $y$ in $\Omega$, we have
\begin{align*}
	g(y)=\ip*{\Phi(y)}{\Psi_{\lambda}^{-1}\mu}_{\calK}
	&\leq\norm*{\Phi(y)}_{\calK}\norm*{\Psi_{\lambda}^{-1}\mu}_{\calK}\nn
	&=\sqrt{\ip*{\Phi(y)}{\Phi(y)}_{\calK}}\norm*{\Psi_{\lambda}^{-1}\mu}_{\calK}\nn
	&\leq\sqrt{M_k}\alpha.
\end{align*}

Similarly, we have $h(y)\leq\sqrt{M_k}\beta$ for any $y$ in $\Omega$. 
Therefore, we obtain
\begin{align*}
	g(Y_i)h(Y_j)\leq M_k\alpha\beta,
\end{align*}
for any $Y_i$ and $Y_j$. Subsequently, we have $c_i=\dfrac{M_k\alpha\beta}{n}$. Applying McDiarmid's inequality to \cref{chi2_rate_b_ineq}, we arrive at
\begin{align*}
	\P(\cref{chi2_rate_b}\geq t)
	\leq\exp\left(-\dfrac{2nt^2}{M_k^2\alpha^2\beta^2}\right).
\end{align*}

\paragraph{Convergence rate for \cref{chi2_rate_a}}
From the Cauchy–Schwarz inequality, we have
\begin{align}
	\abs*{\ip*{\what{\mu}}{\what{\Psi}_{\lambda}^{-1}\what{\mu}}_{\calK}-\ip*{\mu}{\what{\Psi}_{\lambda}^{-1}\what{\mu}}_{\calK}}
	&=\abs*{\ip*{\mu_{\bbP}-\what{\mu}_{\bbP}}{\what{\Psi}_{\lambda}^{-1}\what{\mu}}_{\calK}+\ip*{\mu_{\bbQ}-\what{\mu}_{\bbQ}}{\what{\Psi}_{\lambda}^{-1}\what{\mu}}_{\calK}}\nn
	&\leq\abs*{\ip*{\mu_{\bbP}-\what{\mu}_{\bbP}}{\what{\Psi}_{\lambda}^{-1}\what{\mu}}_{\calK}}+\abs*{\ip*{\mu_{\bbQ}-\what{\mu}_{\bbQ}}{\what{\Psi}_{\lambda}^{-1}\what{\mu}}_{\calK}}. \label{chi2_rate_a_ineq}
\end{align}

For the R.H.S. of \cref{chi2_rate_a_ineq}, we have
\begin{align*}
	\ip*{\mu_{\bbP}-\what{\mu}_{\bbP}}{\what{\Psi}_{\lambda}^{-1}\what{\mu}}_{\calK}
	&=\ofrac{m}\sum_{i=1}^m\ip*{h}{\Phi(X_i)}_{\calK}-\ip*{h}{\E_{Y}[\Phi(Y)]}_{\calK}\nn
	&=\ofrac{m}\sum_{i=1}^mh(X_i)-\E_{Y}[h(Y)],
\end{align*}
where $h=\what{\Psi}_{\lambda}^{-1}\what{\mu}$. Since $h(y)\leq\sqrt{M_k}\beta$ for any $y\in\Omega$, $c_i=\dfrac{M_k\beta}{m}$ for applying McDiarmid's inequality. Subsequently, we obtain
\begin{align*}
	\P(\abs*{\ip*{\mu_{\bbP}-\what{\mu}_{\bbP}}{\what{\Psi}_{\lambda}^{-1}\what{\mu}}_{\calK}}\geq t)
	\leq\exp\left(-\dfrac{2mt^2}{M_k\beta^2}\right).
\end{align*}

Similarly, we have
\begin{align*}
	\P(\abs*{\ip*{\mu_{\bbQ}-\what{\mu}_{\bbQ}}{\what{\Psi}_{\lambda}^{-1}\what{\mu}}_{\calK}}\geq t)
	\leq\exp\left(-\dfrac{2nt^2}{M_k\beta}\right)
\end{align*}

Combining the above results, we arrive at
\begin{align*}
	\P(\cref{chi2_rate_a}\geq 2t)
	\leq\exp\left(-\dfrac{2mt^2}{M_k\beta^2}\right)+\exp\left(-\dfrac{2nt^2}{M_k\beta^2}\right).
\end{align*}

\paragraph{Convergence rate for \cref{chi2_rate_c}}
Note $\cref{chi2_rate_c}$ has a similar bound as $\cref{chi2_rate_b}$:
\begin{align*}
	\abs*{\ip*{\mu}{\Psi_{\lambda}^{-1}\mu}_{\calK}-\ip*{\mu}{\Psi_{\lambda}^{-1}\what{\mu}}_{\calK}}
	&=\abs*{\ip*{\Psi_{\lambda}^{-1}\mu}{\mu}_{\calK}-\ip*{\Psi_{\lambda}^{-1}\mu}{\what{\mu}}_{\calK}}\nn
	&\leq\abs*{\ip*{g}{\mu_{\bbP}-\what{\mu}_{\bbP}}_{\calK}}+\abs*{\ip*{g}{\mu_{\bbQ}-\what{\mu}_{\bbQ}}_{\calK}},
\end{align*}
where $g=\Psi_{\lambda}^{-1}\mu$. Since $g(y)\leq\sqrt{M_k}\alpha$ for any $y\in\Omega$, $c_i=\dfrac{M_k\alpha}{m}$.
Therefore, $\cref{chi2_rate_c}$ attains a similar convergence rate as $\cref{chi2_rate_b}$:
\begin{align*}
	\P(\cref{chi2_rate_c}\geq 2t)
	\leq\exp\left(-\dfrac{2mt^2}{M_k\alpha^2}\right)+\exp\left(-\dfrac{2nt^2}{M_k\alpha^2}\right).
\end{align*}

\bibliographystyle{unsrtnat}
\bibliography{IEEEabrv,StringDefinitions,BibBooks,refs}

\end{document}